\newcommand{\qed}{\hspace{\stretch{1}}$\Box$}
\newcommand{\R}{\ensuremath{\mathds R}}
\newcommand{\Z}{\ensuremath{\mathds Z}}
\newtheorem{lemma}[equation]{Lemma}
\newtheorem{theorem}[equation]{Theorem}
\newtheorem{obs}[equation]{Observation}
\newenvironment{proof}{\vspace{-.25\baselineskip}\noindent\textbf{Proof.}
}{\qed\par\medskip}
\title{Consistent digital line segments}
\author{{Tobias Christ\thanks{Institute of Theoretical Computer Science, ETH Zurich, Switzerland, email: tobias.christ@inf.ethz.ch}} \quad
{D\"om\"ot\"or P\'alv\"olgyi\thanks{Department of Combinatorial Geometry, EPF Lausanne, Switzerland, email: dom@cs.elte.hu.}} \quad {Milo\v{s} Stojakovi\'{c} \thanks{Department
of Mathematics and Informatics, University of Novi Sad, Serbia, email: milos.stojakovic@dmi.uns.ac.rs.}}}
\begin{document}

\maketitle

\begin{abstract}
We introduce a novel and general approach for digitalization of line segments in
the plane that satisfies a set of axioms naturally arising from Euclidean
axioms. In particular, we show how to derive such a system of digital segments
from any total order on the integers. As a consequence, using a well-chosen
total order, we manage to define a system of digital segments such
that all digital segments are, in Hausdorff metric, optimally close to their
corresponding Euclidean segments, thus giving an explicit construction that
resolves the main question of~\cite{CKNT09}.
\end{abstract}

\section{Introduction}






One of the most fundamental challenges in digital geometry is to define a
``good'' digital representation of a geometric object. Of course, the meaning of
the word ``good'' here heavily depends on particular conditions we may impose.
Looking at the problem of digitalization in the plane, the goal is to find a set
of points on the integer grid $\Z^2$ that approximates well a given object. The
topology of the grid $\Z^2$ is commonly defined by the graph whose vertices are
all the points of the grid, and each point is connected by an edge to each of
the four points that are either horizontally or vertically adjacent to it.

Knowing that a straight line segment is one of the most basic geometric objects
and a building block for many other objects, defining its digitalization in a
satisfying manner is vital. Hence, it is no wonder that this has been a hot
scientific topic in the last few decades,
see~\cite{KR04} for a recent survey and
\cite{GDYF86}, \cite{Goodrich97snaprounding}, and \cite{Sugihara01}
for related work, dealing with the problem of representing objects in digital geometry without causing topological and combinatorial inconsistencies.

For any pair of points $p$ and $q$ in the grid
$\Z^2$ we want to define the digital line segment $S(p,q)$ connecting them, that
is, $\{p,q\} \subseteq S(p,q) \subseteq \Z^2$.
Chun et al.\ in~\cite{CKNT09} put forward the following four axioms that arise naturally from
properties of line segments in Euclidean geometry.

\begin{enumerate}

\item[(S1)] \emph{Grid path property:} For all $p, q \in \Z^2$, $S(p,q)$ is the
vertex set of a path from $p$ to $q$ in the grid graph.

\item[(S2)] \emph{Symmetry property:} For all $p, q \in \Z^2$, we have $S(p,q) =
S(q,p)$.

\item[(S3)] \emph{Subsegment property:} For all $p, q \in \Z^2$ and every $r \in
S(p,q)$, we have $S(p,r) \subseteq S(p,q)$.

\item[(S4)] \emph{Prolongation property:} For all $p,q \in \Z^2$, there exists
$r \in \Z^2$, such that $r \notin S(p,q)$ and $S(p,q) \subseteq S(p,r)$.

\end{enumerate}

First, note that (S3) is not satisfied by the usual way a computer visualizes a
segment. A natural definition of the digital straight segment between $p=(p_x,p_y)$ and
$q=(q_x,q_y)$, where $p_x\le q_x$ and $0\le q_y-p_y<q_x-p_x$ is
$\left\{\left(x,\left\lfloor (x-p_x)\frac{q_y-p_y}{q_x-p_x}+p_y+0.5\right\rfloor\right):\,
p_x \leq x \leq q_x \right\}$. This does not satisfy (S1), but it could be
easily fixed by a slight modification of the definition. Still, it also does not satisfy
(S3), for example, for $p=(0,0), r=(1,0), q=(4,1)$, the subsegment from $r$ to $q$ is not
contained in the segment from $p$ to $q$.



Even though the set of axioms (S1)-(S4) seems rather natural, there are still some fairly
exotic examples of digital segment systems that satisfy all four of them. For
example, let us fix a double spiral $\cal D$ centered at an arbitrary point of
$\Z^2$, traversing all the points of $\Z^2$. As it is a spanning path of the
grid graph, we can set $S(p,q)$ to be the path between $p$ and $q$ on $\cal D$,
for every $p,q\in\Z^2$. It is easy to verify that this system satisfies axioms
(S1)-(S4).

Another condition was introduced in~\cite{CKNT09} to enforce the monotonicity of
the segments, ruling out pathological examples like the one above. Here, we
phrase this monotonicity axiom differently, but still, the system of axioms
(S1)-(S5) remains equivalent to the one given in~\cite{CKNT09}.

\begin{enumerate}
\item[(S5)]
\emph{Monotonicity property:} If both $p, q \in \Z^2$ lie on a line that is
either horizontal or vertical, then the whole segment $S(p,q)$ belongs to this line.
\end{enumerate}

We call a system of digital line segments that satisfies the system of axioms
(S1)-(S5) a \emph{consistent digital line segments system (CDS)}. It is
straightforward to verify that every CDS also satisfies the following three
conditions.


\begin{itemize}
\item[(C1)] If the slope of the line going through $p$ and $q$ is non-negative,
then the slope of the line going through any two points of $S(p,q)$ is non-negative.
The same holds for non-positive slopes.
\item[(C2)] For all $p, q \in \Z^2$, the grid-parallel box spanned by points $p$
and $q$ contains $S(p,q)$.
\item[(C3)] If the intersection of two digital segments contains two points $p, q
\in \Z^2$, then their intersection also contains the whole digital segment
$S(p,q)$.
\end{itemize}

We give a simple example of a CDS, where the segments follow the boundary of the
grid-parallel box spanned by the endpoints. Let $p, q\in \Z^2$ be two points
with coordinates $p=(p_x,p_y)$ and $q=(q_x,q_y)$. If $p_y\leq q_y$, we
 define $S(p,q)=S(q,p)=\{(x,p_y):\, \min\{p_x,q_x\} \leq x
\leq \max\{p_x,q_x\} \} \cup \{(q_x,y):\, p_y \leq y \leq q_y\} \}$.
If $p_y > q_y$, we swap the points $p$ and $q$, and define the segment as in
the previous case.

It can be easily verified that this way we defined a CDS, but the digital
segments in this system visually still do not resemble well the Euclidean
segments.

One of the standard ways to measure how close a digital segment is to a
Euclidean segment is to use the Hausdorff distance. We denote by $\overline{pq}$
the Euclidean segment between $p$ and $q$, and by $|\overline{pq}|$ the
Euclidean length of $\overline{pq}$. For two plane objects $A$ and $B$, by $H(A,B)$ we
denote their Hausdorff distance.

The main question raised in~\cite{CKNT09} was if it is possible to define a CDS
such that a Euclidean segment and its digitalization have a reasonably small
Hausdorff distance. More precisely, the goal is to find a CDS satisfying the
following condition.

\begin{enumerate}
\item[(H)] \emph{Small Hausdorff distance property:} For every $p, q \in \Z^2$,
we have that $H(\overline{pq},S(p,q))= O(\log |\overline{pq}|)$.
\end{enumerate}

Note that in the CDS example we gave, the Hausdorff distance between a Euclidean
segment of length $n$ and its digitalization can be as large as $n/\sqrt{2}$.

While this question was not resolved in~\cite{CKNT09}, a clever construction of
a system of digital rays \emph{emanating from the origin of $\Z^2$} that satisfy
(S1)-(S5) and (H) was presented. Moreover, it was shown using Schmidt's theorem~\cite{Sch72}
that already
for rays emanating from the origin, the $\log$-bound imposed in condition (H) is
the best bound we can hope for, directly implying the following theorem.

\begin{theorem} \textnormal{\cite{CKNT09}} \label{t:cknt}
There exists a constant $c>0$, such that for any CDS and any $d > 0$,
there exist $p, q \in \Z^2$ with $|\overline{pq}| > d$, such that
$H(\overline{pq},S(p,q))> c\log |\overline{pq}|$.
\end{theorem}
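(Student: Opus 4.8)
The plan is to reduce the statement to the special case of digital rays emanating from the origin, and then to invoke Schmidt's theorem on irrational rotations / well-distributed sequences, exactly as indicated in the excerpt. First I would fix an arbitrary CDS and consider, for a fixed point $o\in\Z^2$ (say the origin), the family of digital segments $S(o,q)$ as $q$ ranges over the first quadrant. Using the subsegment property (S3) together with monotonicity (S5) and the derived conditions (C1)--(C2), one shows that these segments are nested in a coherent way: for $q$ far out along a direction of irrational slope, the truncations $S(o,q)$ stabilize to a single infinite monotone staircase path $R$ from $o$, a ``digital ray''. Thus every CDS contains, around each grid point, a digital ray system of the type studied in \cite{CKNT09}.

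The second step is the quantitative lower bound for a single digital ray. Here I would use the description of a monotone digital ray $R$ in the first quadrant by its sequence of ``column heights'': for each $x\in\N$, let $f(x)$ be the $y$-coordinate at which $R$ passes through column $x$. Monotonicity forces $f$ to be nondecreasing, and the grid-path property forces $f$ to increase by steps compatible with a staircase. The Hausdorff distance between the initial segment of $R$ up to abscissa $n$ and the Euclidean segment $\overline{o\,(n,f(n))}$ is, up to constants, the maximal vertical discrepancy $\max_{x\le n}\bigl|f(x)-x\cdot f(n)/n\bigr|$. Schmidt's theorem \cite{Sch72} on the irregularities of distribution of the sequence $(x\alpha \bmod 1)_x$ guarantees that for any real $\alpha$ (in particular for the ``limiting slope'' of the ray) this discrepancy is $\Omega(\log n)$ for infinitely many $n$; translating back, this yields points $q=(n,f(n))$ on the ray with $H(\overline{oq},S(o,q))=H(\overline{oq},R\cap \text{box})\ge c\log|\overline{oq}|$ for infinitely many $n$, hence for arbitrarily large $d$.

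The third step is simply to assemble: since the chosen $S(o,q)=R\cap(\text{box spanned by }o,q)$ by the stabilization argument, the bad pairs $(o,q)$ produced above are honest pairs of the given CDS, and $|\overline{oq}|\to\infty$, so the theorem follows with the constant $c$ coming from Schmidt's theorem.

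The step I expect to be the main obstacle is the first one: making precise the claim that the finite segments $S(o,q)$ of an \emph{arbitrary} CDS organize themselves into a well-defined infinite ray, and that the relevant Hausdorff distance really is governed by the one-dimensional discrepancy function $f$. One has to be careful that (S3) is used in the right direction (subsegments of $S(o,q)$, not prolongations), that (S4) guarantees the ray is genuinely infinite, and that no pathology of the CDS prevents the columns from being visited in order --- this is where (C1) and (C2) do the work. Once the reduction to a single monotone lattice path and its discrepancy is clean, the appeal to \cite{Sch72} is routine.
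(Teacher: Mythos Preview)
The paper does not give its own proof of this theorem; it is quoted from \cite{CKNT09}, with only the one-line remark that the result follows from Schmidt's theorem applied to digital rays emanating from a fixed point. So there is no detailed argument in the paper to compare against, and your task was really to reconstruct the \cite{CKNT09} argument.

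Your reconstruction has a genuine gap in Step~2. You reduce to a \emph{single} infinite monotone staircase $R$ with column heights $f(x)$ and claim that Schmidt's theorem forces $\max_{x\le n}\lvert f(x)-x\,f(n)/n\rvert=\Omega(\log n)$ infinitely often. This is false: take $f(x)=\lfloor \alpha x\rfloor$ for any real $\alpha$. Then $\lvert f(x)-\alpha x\rvert<1$ for all $x$, and since $\lvert \alpha - f(n)/n\rvert<1/n$, one gets $\lvert f(x)-x\,f(n)/n\rvert<2$ for all $x\le n$. So a single digital ray can have Hausdorff distance $O(1)$ from every one of its chords, and no appeal to Schmidt can change that. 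You are conflating two different discrepancies: the pointwise deviation $\lvert f(x)-\alpha x\rvert$ (which controls the Hausdorff distance of one ray and can be bounded) and the distributional discrepancy of the sequence $(\alpha x \bmod 1)_x$ in $[0,1)$ (which is what Schmidt bounds below by $\Omega(\log n)$). These are not the same quantity.

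The actual argument in \cite{CKNT09} does not isolate one ray. It uses the whole \emph{tree} of rays from $o$: by (S3) the segments $S(o,q)$ for $q$ in the first quadrant form a binary tree, and on each anti-diagonal $x+y=k$ there is exactly one branching point. The positions of these branching points, suitably rescaled, form a sequence in $[0,1)$, and one checks that the distributional discrepancy of this sequence lower-bounds the worst Hausdorff distance among \emph{all} the rays $S(o,q)$ with $q$ on the $k$-th anti-diagonal. Schmidt's theorem then gives $\Omega(\log k)$ for infinitely many $k$, hence some pair $(o,q)$ with large Hausdorff distance. The consistency axiom (S3) is what ties all the rays together and makes the lower bound bite; your Step~1 (passing to a single limiting ray) throws exactly this information away. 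Your Step~3 is fine once Step~2 is fixed, and your worry about Step~1 is misplaced: the reduction from a CDS to a ray system at $o$ is immediate (just restrict), no limiting or stabilization argument is needed.
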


In this paper, we introduce a novel and general approach for the construction of
a CDS. Namely, for any total order $\prec$ on $\Z$, we show how to derive a CDS
from $\prec$. (By total order we always mean a strict total order.)
This process is described in Section~\ref{s:def}. As a
consequence, in Section~\ref{s:H}, we manage to define a CDS that satisfies (H),
deriving it from a specially chosen order on $\Z$, and thus giving the explicit
construction that resolves the main question of~\cite{CKNT09}.



\begin{theorem} \label{t:main}
There is a CDS that satisfies condition (H).
\end{theorem}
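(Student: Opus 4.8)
The strategy is in two stages, mirroring the paper's stated plan. First, I would establish a general transfer principle: from any strict total order $\prec$ on $\Z$, build a CDS. The idea is that a total order on $\Z$ should tell us, at each vertical line $x = i$, which among two competing ``heights'' a digital segment should pass through when it has to round — essentially prioritizing among the integer columns. Concretely, I would first construct a system of digital \emph{rays} emanating from the origin (the natural object, since (S3) plus (S4) make segments behave like intervals in a tree of rays), by processing the columns $1, 2, 3, \dots$ in the order dictated by $\prec$ and deciding, greedily, whether the ray to a far point turns ``up'' or stays ``flat'' at each newly processed column, in a way consistent with all rays through that column. Then I would upgrade rays-from-the-origin to a full CDS by translating: define $S(p,q)$ using the ray structure anchored appropriately, taking care that (S2) symmetry forces a consistent choice (one typically orients by the leftmost endpoint, or lexicographically, and checks the axioms case by case). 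Verifying (S1)–(S5) for this construction is the bulk of stage one but is essentially bookkeeping: (S1) and (S5) are immediate from the column-by-column path construction, (S2) is by the orientation convention, and (S3) together with (S4) follow because the order $\prec$ gives a globally consistent priority, so subsegments and prolongations never conflict.

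Second, I would choose the order $\prec$ cleverly so that the resulting CDS satisfies (H). The guiding intuition: the Hausdorff distance $H(\overline{pq}, S(p,q))$ is controlled by how far the digital segment can ``lag behind'' the Euclidean line before it catches up, and each time the processing order forces a correction, the lag can at most double or halve the relevant scale. So I want an order on $\Z_{>0}$ under which, reading the columns in $\prec$-order, every initial segment is ``well spread out'' — no long run of consecutive integers appears before some integer far away. The right choice is the \emph{bit-reversal} / van der Corput type order: compare $i$ and $j$ by reversing their binary representations (padded suitably) and comparing the results, or equivalently order by the $2$-adic-style ruler sequence. Under such an order, among any $2^k$ consecutive columns the first one processed is at ``resolution'' $2^k$, the next few refine to $2^{k-1}$, and so on, so the digital ray can be forced off the Euclidean ray by at most $O(k) = O(\log n)$ at scale $n$. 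I would then prove the $O(\log|\overline{pq}|)$ bound by induction on the scale: show that within a dyadic block of width $2^k$, the digital segment deviates from the Euclidean segment by at most the deviation at the block's endpoints plus $O(1)$, and unwind the recursion over the $O(\log n)$ dyadic levels.

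**Main obstacle.** I expect the genuinely delicate point to be the second stage's deviation bound, and specifically making precise the claim that the bit-reversal order yields only a constant additive loss per dyadic level. One has to pin down exactly how the ``turn up or stay flat'' decision at a newly processed column propagates to segments that were already committed, show that this propagation is monotone and does not accumulate error faster than logarithmically, and handle segments of arbitrary (not just dyadic, not just positive) slope by reduction to the canonical case via symmetry and the box property (C2). A secondary subtlety is the matching lower bound is already known (Theorem~\ref{t:cknt}), so I do not need it, but I must be careful that the $O(\log|\overline{pq}|)$ in (H) is uniform over \emph{all} pairs $p,q$, including nearly-horizontal and nearly-vertical segments where the relevant ``scale'' is the longer side of the bounding box; this forces the induction to be set up in terms of that longer coordinate difference rather than Euclidean length, with $|\overline{pq}|$ recovered at the end up to a constant factor.
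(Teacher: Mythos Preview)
Your stage-two plan---a van der Corput / bit-reversal style order, then a dyadic induction picking up $O(1)$ per level---is essentially what the paper does, so the analytic half is on target.

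The gap is in stage one, and it is not bookkeeping. You propose to (a) index decisions by \emph{columns} $x=i$ and (b) build rays from the origin, then obtain general segments by translating to the leftmost endpoint. This is precisely the setup of~\cite{CKNT09}, which only produced rays from a single source; extending that to a full CDS is the contribution of the present paper. The failure mode is (S3): if $r\in S(p,q)$ with $p$ leftmost, then $S(p,r)\subseteq S(p,q)$ is fine (same anchor), but (S3) together with (S2) also forces $S(r,q)\subseteq S(p,q)$, and $S(r,q)$ is now anchored at $r$, not at $p$. A column-priority rule read relative to the anchor gives no reason for the $r$-anchored and $p$-anchored paths to agree on the stretch from $r$ to $q$. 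Calling this ``globally consistent priority'' does not make it so.

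The paper's fix is to index not by the column $x$ but by the \emph{anti-diagonal} $x+y$, and to define $S_\prec(p,q)$ directly (no ray tree, no translation): at a point $(x,y)$ on the path, go up iff $x+y$ is among the $q_y-p_y$ $\prec$-largest elements of the absolute interval $[p_x+p_y,\,q_x+q_y-1]$. Now suppose two segments split at $(a,b)$ and first rejoin at $(c,d)$. One goes up at $(a,b)$ and enters $(c,d)$ from the left, forcing $c+d-1\prec a+b$; the other does the opposite, forcing $a+b\prec c+d-1$---contradiction. This two-line argument \emph{is} the proof of (S3), and it works only because both segments consult the same total order on the same absolute integers $a+b$ and $c+d-1$, independent of their endpoints. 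That is the ``global consistency'' you want, and it is a feature of the anti-diagonal indexing, not of the column/translation scheme you describe.

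Once stage one is set up this way, your stage-two outline goes through almost verbatim: the paper normalizes so that the anti-diagonal interval has length $2^{k+1}-1$ (losing at most a factor~$2$ in Hausdorff distance), splits at the $\prec$-maximal element (which sits at the exact midpoint), uses an alternation lemma for the $2$-adic order to show the two halves have vertical rises differing by at most~$1$, absorbs an additive $\sqrt{5}/2$ from the geometry of that split, and recurses on $k$.
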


Note that Theorem~\ref{t:cknt} ensures that such a CDS is optimal up to a constant factor in terms of
the Hausdorff distance from the Euclidean segments. In Section~\ref{s:char}
we make a step towards a characterization of CDSes, demonstrating their natural connection to total orders on $\Z$, while
in Section~\ref{s:lines} we make use of the digital line segment definition to introduce digital lines. Finally, in
Section~\ref{s:high-d} we discuss prospects of using a similar approach to define a CDS in higher dimensions.

\section{Digital line segments derived from a total order on $\Z$} \label{s:def}

Let $\prec$ be a total order on $\Z$. We are going to define a CDS ${\cal
S}_\prec$, deriving it from $\prec$.

Let $p,q \in \Z^2$, $p = (p_x,p_y)$ and $q
= (q_x,q_y)$. If $p_x>q_x$, we swap $p$ and $q$. Hence, from now on we may
assume that $p_x \leq q_x$.

If $p_y \leq q_y$, then $S_{\prec}(p,q)$ is defined as follows. We start at the
point $p=(p_x,p_y)$ and we repeatedly go either up or to the right, collecting
the points from $\Z^2$, until we reach $q$. Note that the sum of the coordinates $x+y$
increases by $1$ in each step. In total we have to make $q_x+q_y-p_x-p_y$ steps
and in exactly $q_y - p_y$ of them we have to go up.
The decision whether to go up or to the right is made as follows: if we are at the
point $(x,y)$ for which $x+y$ is among the $q_y-p_y$ greatest elements of the
interval $[p_x+p_y,q_x+q_y-1]$ according to $\prec$, we go up, otherwise we go
to the right. We will refer to this interval as the \emph{segment interval}.

If $p_y > q_y$, that is, if $p$ is the top-left and $q$ the bottom-right corner
of the grid-parallel box spanned by $p$ and $q$, then we define $S_{\prec}(p,q)$
as the mirror reflection of $S_{\prec}((-q_x,q_y),(-p_x,p_y))$ over the
$y$-axis.

\noindent
{\bf Example.} Suppose $p=(0,0)$ and $q=(2,2)$. Their segment interval consists
of four numbers, $0,1,2,3$. If $\prec$ is the natural order on $\Z$, then the two
greatest elements of the segment interval are $2$ and $3$. Since $0+0$ is not
one of these, at $(0,0)$ we go right, to $(1,0)$. At $(1,0)$ we again go to
right, to $(2,0)$, from there to $(2,1)$ (since $2+0$ is one of the greater
elements) and finally to $(2,2)$. In fact, it can be easily seen that using the
natural order on $\Z$
we get the CDS mentioned in Section 1, the one that always follows the boundary of
the box spanned by the endpoints.

\begin{theorem} \label{theorem_order_to_CDS}
${\cal S}_\prec$, defined as above, is a CDS.
\end{theorem}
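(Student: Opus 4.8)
The plan is to verify the five axioms (S1)--(S5) one by one for $\mathcal{S}_\prec$. Axioms (S1), (S2), and (S5) are essentially immediate from the construction: by design we only ever step up or to the right, making exactly $q_y-p_y$ up-steps among $q_x+q_y-p_x-p_y$ total steps, so $S_\prec(p,q)$ is a grid path from $p$ to $q$, giving (S1); symmetry (S2) holds because the rule ``go up iff $x+y$ is among the $q_y-p_y$ largest elements of the segment interval'' is manifestly independent of which endpoint we call $p$ and which $q$ (the segment interval $[p_x+p_y,\,q_x+q_y-1]$ and the number of required up-steps are symmetric functions of the two points), and the reflection convention for the $p_y>q_y$ case is consistent; and (S5) holds because if $p,q$ lie on a common horizontal or vertical line then one of the two step-counts is zero, forcing the whole path onto that line.

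The substantive work is in (S3), the subsegment property, and (S4), the prolongation property. For (S3), I would take $r=(r_x,r_y)\in S_\prec(p,q)$ and show $S_\prec(p,r)\subseteq S_\prec(p,q)$. After reducing to the case $p_x\le r_x\le q_x$ and $p_y\le r_y\le q_y$ (which follows from the monotone path structure and the reflection conventions), the key observation is that the segment interval of $(p,r)$, namely $[p_x+p_y,\,r_x+r_y-1]$, is an initial sub-interval of the segment interval $[p_x+p_y,\,q_x+q_y-1]$ of $(p,q)$. So the decisions made when building $S_\prec(p,r)$ are governed by the $\prec$-order restricted to this shorter interval. I would argue that the set of positions in $[p_x+p_y,\,r_x+r_y-1]$ where $S_\prec(p,q)$ steps up is precisely the $(r_y-p_y)$ largest elements of that shorter interval according to $\prec$: indeed, among all of $[p_x+p_y,q_x+q_y-1]$, the path $S_\prec(p,q)$ goes up on the $\prec$-top $(q_y-p_y)$ elements, and since it passes through $r$ it must use exactly $r_y-p_y$ up-steps within the prefix $[p_x+p_y,r_x+r_y-1]$; those must be the $\prec$-largest $r_y-p_y$ elements of the prefix, because if some prefix element $a$ were skipped (step right) while a prefix element $b\prec a$ were used (step up), then $b$ would be among the global $\prec$-top $(q_y-p_y)$ but $a$ would not, contradicting $a\succ b$. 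Hence $S_\prec(p,r)$ makes exactly the same up/right decisions as $S_\prec(p,q)$ on the common prefix, so the two paths agree up to $r$, giving $S_\prec(p,r)\subseteq S_\prec(p,q)$.

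For (S4), given $p,q$ I need a point $r$ with $r\notin S_\prec(p,q)$ and $S_\prec(p,q)\subseteq S_\prec(p,r)$. Assuming WLOG $p_x\le q_x$ and $p_y\le q_y$, the natural candidate is to extend one step beyond $q$ in a direction consistent with the order: look at whether $q_x+q_y$ (the first element past the current segment interval) is, relative to the enlarged interval $[p_x+p_y,\,q_x+q_y]$, large or small under $\prec$; depending on that, take $r=(q_x,q_y+1)$ or $r=(q_x+1,q_y)$. By the same prefix argument as in (S3), $S_\prec(p,r)$ restricted to the original segment interval makes the same decisions as $S_\prec(p,q)$, so $S_\prec(p,q)\subseteq S_\prec(p,r)$, and $r\ne q$ ensures $r\notin S_\prec(p,q)$. (If $p=q$ one instead just takes any grid-neighbour.) I would expect axiom (S3) to be the main obstacle: making the ``$\prec$-largest $k$ elements of the prefix'' argument fully rigorous — including carefully handling the reflection convention when $p_y>q_y$ and checking that sub-intervals of segment intervals really are prefixes — is where the care is needed, whereas (S1), (S2), (S5) are one-liners and (S4) follows the same template as (S3).
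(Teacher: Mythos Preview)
Your proof is correct, and for axioms (S1), (S2), (S4), and (S5) it matches the paper's argument essentially verbatim. The one genuine difference is in (S3). The paper argues by contradiction: if $S_\prec(p,r)$ were not contained in $S_\prec(p,q)$, the two paths would separate at some point $(a,b)$ and rejoin for the first time at some later point $(c,d)$; comparing which path goes up and which goes right at $(a,b)$, and which enters $(c,d)$ from below versus from the left, yields both $a+b \prec c+d-1$ and $c+d-1 \prec a+b$, a contradiction. Your approach is instead direct: you show that the up-steps of $S_\prec(p,q)$ restricted to the prefix interval $[p_x+p_y,\,r_x+r_y-1]$ are precisely the $\prec$-largest $r_y-p_y$ elements of that prefix, so $S_\prec(p,r)$ and $S_\prec(p,q)$ make identical decisions up to $r$. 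Both arguments are short; yours is more constructive and makes the ``subsegments inherit the order'' intuition explicit, while the paper's split--rejoin contradiction is slightly more symmetric in that it covers the case where the fixed endpoint is the upper-right one (your ``suffix'' case, which you would need to note separately) with no additional work.
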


\begin{proof}
We will verify that ${\cal S}_\prec$ satisfies the axioms (S1)-(S5).
\begin{itemize}

\item[(S1)] The condition (S1) follows directly from the definition of ${\cal
S}_\prec$.

\item[(S2)] Let $p,q$ be two points from $\Z^2$. If the first coordinates of $p$ and
$q$ are different, then condition (S2) follows directly. Otherwise, $p$ and $q$
belong to the same vertical line, and from the construction we see that both
$S_{\prec}(p,q)$ and $S_{\prec}(q,p)$ consist of all the points on that line
between $p$ and $q$.

\item[(S3)] For a contradiction, assume that there are points $p=(p_x,p_y)$,
$q=(q_x,q_y)$ and $r=(r_x,r_y)$, with $r\in S_{\prec}(p,q)$, such that
$S_{\prec}(p,r) \not\subseteq S_{\prec}(p,q)$.
W.l.o.g.\ we may assume that $\overline{pq}$ has a non-negative slope.

\emph{Case 1.} $p_x\leq q_x$ and $p_y\leq q_y$.
We also have $p_x\leq r_x$ and $p_y\leq r_y$, and going on each of the segments
$S_{\prec}(p,r)$ and $S_{\prec}(p,q)$ point-by-point starting from $p$, we move
either up or right. By assumption, these two segments separate at some point
$(a,b)$ and then meet again, for the first time after this separation, at some
other point $(c,d)$, see Figure~\ref{fig_subsegment_property}. One of the
segments goes up at $(a,b)$ and enters $(c,d)$ horizontally coming from the
left, which implies that $a+b$ is among the greater numbers of the segment
interval of this segment, while $c+d-1$ is not, thus $c+d-1 \prec a+b$. But the
other segment goes horizontally at $(a,b)$ and enters $(c,d)$ vertically coming
from below, which similarly implies $ a + b \prec c + d - 1$, a contradiction.

\begin{figure}
\centering
\includegraphics{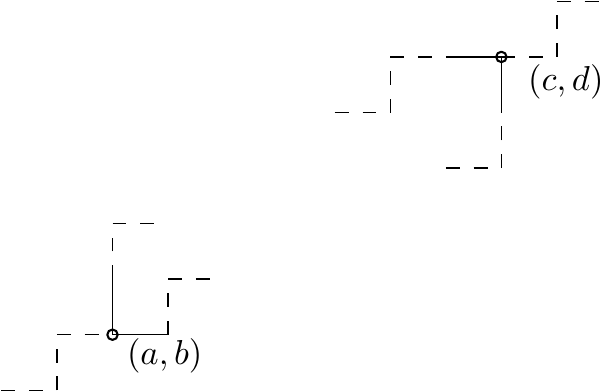}
\caption{Two paths splitting up at $(a,b)$ and meeting again at $(c,d)$.
\label{fig_subsegment_property}}
\end{figure}

\emph{Case 2.} $q_x\leq p_x$ and $q_y\leq p_y$.
We also have $q_x\leq r_x$ and $q_y\leq r_y$. By assumption, the two segments
starting at $q$ and $r$, $S_{\prec}(q,p)$ and $S_{\prec}(r,p)$, separate at some point $(a,b)$ and then meet again, for
the first time after this separation, at some other point $(c,d)$. Using the
same argument as before, we get a contradiction. Hence, (S3) holds.

\item[(S4)] To show that condition (S4) holds, consider the segment from $p=(p_x,p_y)$
to $q=(q_x,q_y)$. W.l.o.g.\ we can assume that $p_x \leq q_x$ and $p_y \leq
q_y$. We distinguish two cases.

\emph{Case 1.} If $q_x+q_y$ is among the $q_y-p_y+1$ greatest numbers of $[p_x+p_y,q_x+q_y]$
according to $\prec$, then we can prolong the segment going one step vertically
up, that is, the segment $S_{\prec}((p_x,p_y),(q_x,q_y+1))$ contains the segment
$S_{\prec}((p_x,p_y),(q_x,q_y))$ as a subsegment.

\emph{Case 2.} If, on the other hand,
$q_x+q_y$ is not among the $q_y-p_y$ greatest numbers of $[p_x+p_y,q_x+q_y]$, we
can prolong the segment horizontally to the right, that is,
$S_{\prec}((p_x,p_y),(q_x,q_y)) \subset S_{\prec}((p_x,p_y),(q_x+1,q_y))$.

Note that if $q_x+q_y$ is exactly the $(q_y-p_y+1)^{th}$ number in
$[p_x+p_y,q_x+q_y]$, then the conclusions of both cases are true, and indeed the
rays emanating from $(p_x,p_y)$ split at $(q_x,q_y)$.

\item[(S5)] The condition (S5)
follows directly from the definition of ${\cal S}_\prec$.
\end{itemize}
\vspace{-10mm}
\end{proof}

Apparently in the definition of $S_{\prec}(p,q)$ only the sum of the coordinates of the points
plays a role, so if we translate $p$ and $q$ by a vector $(t,-t)$, for any integer $t$, the
digital line segment will look the same.

\begin{obs} \label{obs_diagonal_translation_invariance}
    Let $t \in \Z$ be an integer. $S_{\prec}(p+(t,-t),q+(t,-t)) = S_{\prec}(p,q) + (t,-t) = \{(x+t,y-t) \in \Z^2 : (x,y) \in S_{\prec}(p,q) \}$.
\end{obs}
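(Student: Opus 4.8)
The plan is to show directly from the construction that shifting both endpoints by the vector $(t,-t)$ produces the original segment shifted by the same vector. The key observation driving everything is that the decision to go up or right at each step depends \emph{only} on the quantity $x+y$ (the anti-diagonal coordinate), and never on $x$ and $y$ separately. So first I would reduce to the canonical case $p_x \le q_x$ (for both the original pair and the translated pair, which is automatic since the translation $(t,-t)$ leaves the $x$-coordinate difference unchanged; in fact it shifts each $x$-coordinate by the same $t$, so $p_x \le q_x$ holds for the translate iff it holds for the original). A symmetric remark handles the relative order of the $y$-coordinates: translating by $(t,-t)$ shifts every $y$ by $-t$, so $p_y \le q_y$ is likewise preserved, meaning the translated pair falls into exactly the same branch of the definition (the $p_y\le q_y$ case versus the $p_y>q_y$ reflected case).

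Next, in the main case $p_x\le q_x$, $p_y\le q_y$, I would compare the segment intervals. For $(p,q)$ the segment interval is $[p_x+p_y,\,q_x+q_y-1]$, while for the translated endpoints $p+(t,-t)=(p_x+t,p_y-t)$ and $q+(t,-t)=(q_x+t,q_y-t)$ the segment interval is
\[
[(p_x+t)+(p_y-t),\,(q_x+t)+(q_y-t)-1]=[p_x+p_y,\,q_x+q_y-1],
\]
i.e.\ \emph{literally the same interval}. Since the up/right decision at a point with anti-diagonal value $s$ asks only whether $s$ is among the $q_y-p_y$ greatest elements of this interval according to $\prec$ — and the number of up-steps $q_y-p_y=(q_y-t)-(p_y-t)$ is also unchanged — the two constructions make identical decisions at each value of $x+y$. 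Thus the walk for the translated pair is the walk for $(p,q)$ with every visited point shifted by $(t,-t)$, which is exactly the claimed identity.

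Finally I would dispose of the degenerate and reflected cases. If $p_x=q_x$ (a vertical segment) the segment is just the set of grid points between the endpoints on a vertical line, and this set translates in the obvious way; similarly no subtlety arises from $p_y$ versus $q_y$ here. The case $p_y>q_y$ is handled by unwinding the definition through the $y$-axis reflection: since reflecting over the $y$-axis and translating by $(t,-t)$ interact cleanly (the reflection sends $(t,-t)$-translation to $(-t,-t)$-type behavior on the already-reflected configuration, which reduces to the previously treated increasing case), the invariance propagates. I do not expect any genuine obstacle here: the whole statement is essentially an unpacking of the fact that only $x+y$ enters the decision rule, and the translation $(t,-t)$ preserves $x+y$ pointwise. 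The only mild care needed is bookkeeping to confirm that the translated pair lands in the same case of the definition, which the interval computation above makes transparent.
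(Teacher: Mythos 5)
Your treatment of the main case $p_x\le q_x$, $p_y\le q_y$ is correct and is precisely the paper's (one-sentence) justification: translation by $(t,-t)$ fixes $x+y$ pointwise, hence fixes the segment interval and the number of up-steps, so the walk makes identical decisions and simply translates. The genuine gap is in your final paragraph. You correctly note that conjugating the translation $(t,-t)$ by the reflection $(x,y)\mapsto(-x,y)$ yields a translation of the reflected configuration by $(-t,-t)$, but this does \emph{not} ``reduce to the previously treated increasing case'': a $(-t,-t)$-translation changes $x+y$ by $-2t$, so the segment interval of the reflected pair is shifted by $-2t$, and a general total order $\prec$ on $\Z$ is not invariant under such shifts. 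Concretely, with the power-of-$2$ order of Section~3, take $p=(0,4)$, $q=(3,3)$, $t=1$. The reflected pair for $(p,q)$ is $((-3,3),(0,4))$ with segment interval $\{0,1,2,3\}$, whose $\prec$-greatest element is $0$, sitting at the left end; unwinding gives $S_\prec(p,q)=\{(0,4),(1,4),(2,4),(3,4),(3,3)\}$. The reflected pair for $(p+(1,-1),q+(1,-1))=((1,3),(4,2))$ is $((-4,2),(-1,3))$ with segment interval $\{-2,-1,0,1\}$, whose $\prec$-greatest element is again $0$ but now sits at offset $2$; unwinding gives $S_\prec((1,3),(4,2))=\{(1,3),(2,3),(2,2),(3,2),(4,2)\}$, which differs from $S_\prec(p,q)+(1,-1)=\{(1,3),(2,3),(3,3),(4,3),(4,2)\}$.

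So for segments of negative slope the invariance genuinely fails for a general order, and your assertion that ``the invariance propagates'' with ``no genuine obstacle'' is the missing (and unfixable, in general) step. The paper's own justification speaks only of ``the sum of the coordinates'' playing a role, which is the non-negative-slope branch of the definition, and that is also the only setting in which the observation is later invoked; read that way, your argument for the case actually at issue is complete and matches the paper's. The error is in claiming to prove more than is true and disposing of the reflected case by fiat rather than checking how the shifted interval interacts with $\prec$.
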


\section{Digital segments with small Hausdorff distance to Euclidean
segments} \label{s:H}



For integers $k$ and $l\geq 2$, let $|k|_l$ denote the number of times $k$ is
divisible by $l$, that is, $$|k|_l = \sup\left\{m:\, l^m \,|\, k\right\}.$$

We define a total order on $\Z$ as follows. Let $a \prec b$ if and only if there
exists a non-negative integer $i$ such that $|a-i|_2 < |b-i|_2$, and for all
$j\in\{0,\dots, i-1\}$ we have $|a-j|_2 = |b-j|_2$. In plain words, for two
integers $a$ and $b$, we say that the one that contains a higher power of $2$ is
greater under $\prec$. In case of a tie, we repeatedly subtract 1 from both $a$
and $b$, until at some point one of them contains a higher power of $2$ than the
other. Thus, for example, $-1\prec -5\prec 3\prec -3\prec 5\prec 1\prec -2\prec 6\prec -6\prec 2\prec -4\prec 4\prec 0$.

%

Note that if we take the elements of an interval of the form $(-2^n,2^n)$ in $\prec$-decreasing
order and we apply the function $0.5-x2^{-n-1}$ to them, then we get the first few elements of the Van der
Corput sequence~\cite{vdC35}.

\begin{figure}
\centering
\vspace{-1mm}
\includegraphics[width=0.9\textwidth]{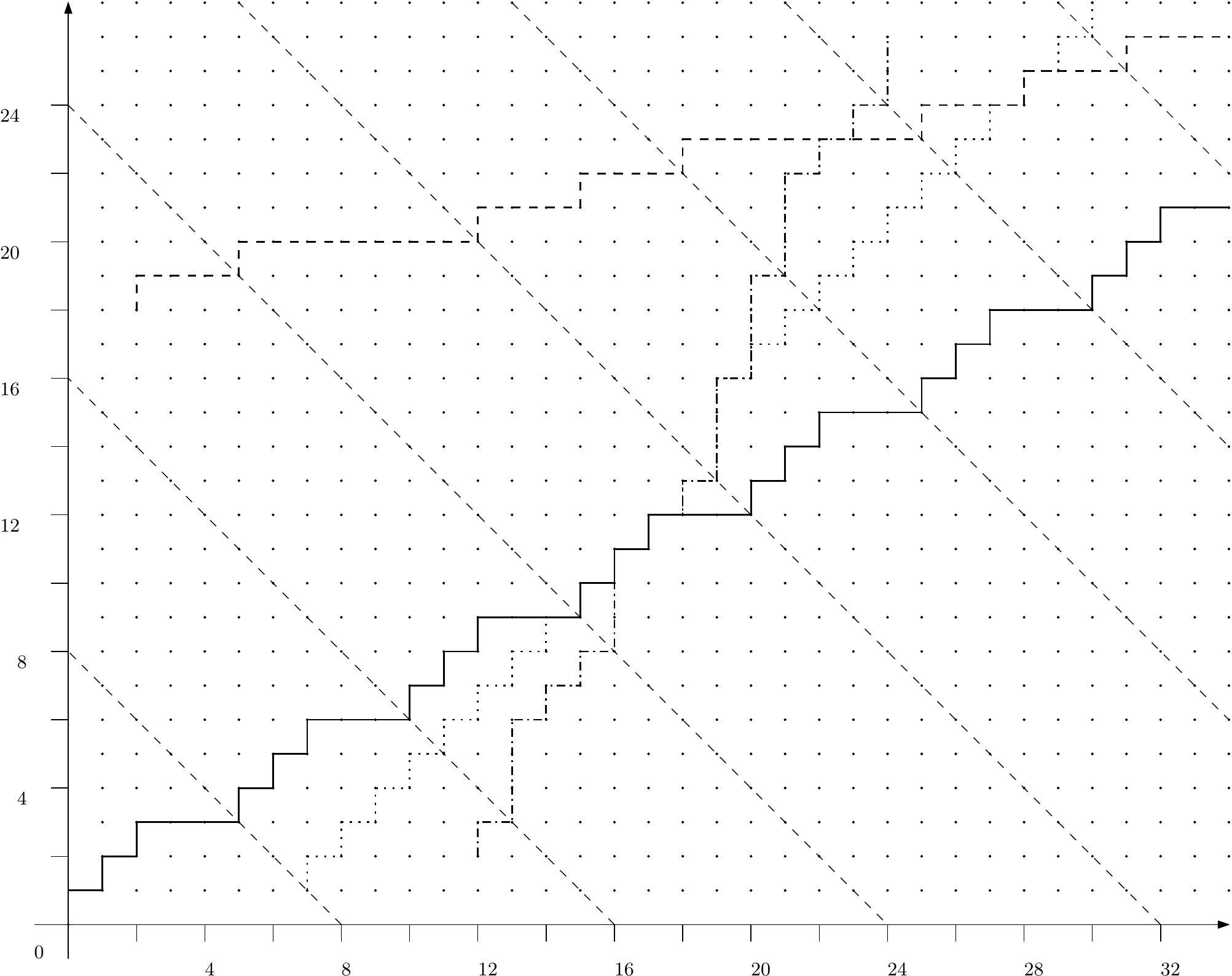}
\caption{Some line segments.\label{fig_long_segments}}
\vspace{-1mm}
\end{figure}

We will prove that using this total order to define the system of digital line segments ${\cal
S}_\prec$, as described in the previous section, we obtain a CDS which
satisfies condition (H). In
Figure~\ref{fig_long_segments} we give some examples of digital segments in this CDS, and
Figure~\ref{fig_rays_from_0_and_more} shows the segments emanating from $(0,0)$ to
some neighboring points, as well as the segments from $(2,3)$ to the neighboring points.

At first sight it may be
surprising to observe that all the segments emanating from the origin in our
construction coincide with the ones given in the construction of digital rays
in~\cite{CKNT09}. However, it is not a coincidence, as the construction
from~\cite{CKNT09} also relies on the same total order on integers.

\begin{figure}
\centering
\includegraphics{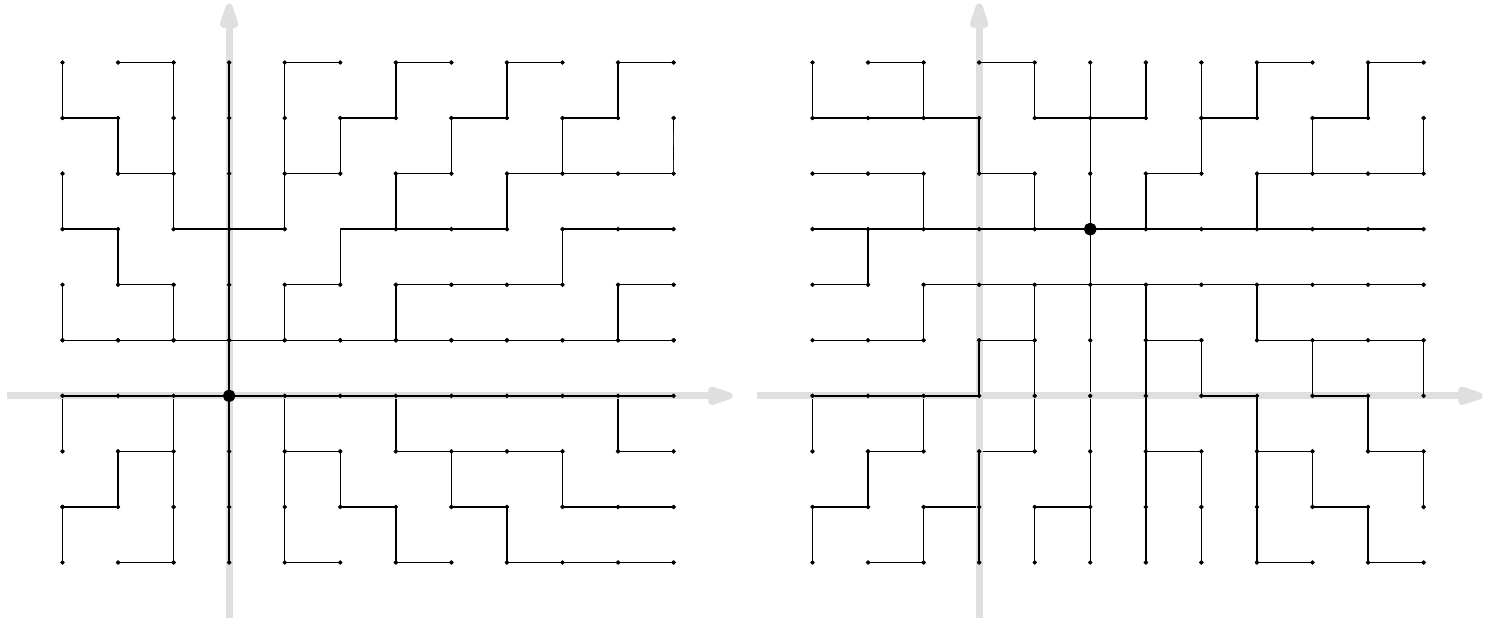}
\caption{Digital line segments emanating from $(0,0)$ and from $(2,3)$. \label{fig_rays_from_0_and_more}}
\end{figure}


For points $v,w \in \R^2$ and $A \subseteq \R^2$, let $d(v,w) = |v-w|$ and $d(v,A) = \inf_{a \in A} d(v,a)$ denote the
usual Euclidean distances between two points, and between a point and a set. For $p,q,r,s \in \Z^2$, by
$\overline{pqrs}$ we denote the union of Euclidean linear line segments from $p$ to $q$, from $q$ to $r$, and
from $r$ to $s$.

\begin{obs}
For any $p,q \in \Z^2$, $H(S_\prec(p,q),\overline{pq}) =
max\{d(r,\overline{pq}) : r \in S_\prec(p,q) \}$.
\end{obs}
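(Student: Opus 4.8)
The plan is to unfold the Hausdorff distance into its two one-sided pieces and show that the ``Euclidean-to-digital'' piece never exceeds the ``digital-to-Euclidean'' piece, so that the latter --- which is exactly the claimed right-hand side --- dominates. Writing $A=\max\{d(r,\overline{pq}):r\in S_\prec(p,q)\}$ (a maximum, since $S_\prec(p,q)$ is finite) and $B=\sup\{d(s,S_\prec(p,q)):s\in\overline{pq}\}$, the definition of Hausdorff distance gives $H(S_\prec(p,q),\overline{pq})=\max\{A,B\}$, and $A$ is the right-hand side of the statement. Hence the whole equality reduces to the single inequality $B\le A$: every point of the Euclidean segment must have some grid point of $S_\prec(p,q)$ within distance $A$.

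Before attacking $B\le A$ I would normalize the configuration. Using the reflection in the last case of the definition of $S_\prec$ together with the diagonal translation invariance of Observation~\ref{obs_diagonal_translation_invariance}, I may assume that $p$ is the lower-left and $q$ the upper-right corner of the bounding box, so that $S_\prec(p,q)$ is a lattice path monotone in both coordinates which, by (C2), stays inside the box. It is convenient to pass to the continuous staircase $\gamma$ obtained by joining consecutive grid points of $S_\prec(p,q)$ by unit edges. Since $x\mapsto d(x,\overline{pq})$ is convex (it is the distance to a convex set), on each unit edge of $\gamma$ it attains its maximum at an endpoint, that is, at a grid point; consequently $\max_{c\in\gamma}d(c,\overline{pq})$ is attained at a grid point and equals $A$. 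Thus $\gamma$ lies in the closed $A$-neighbourhood strip of $\overline{pq}$, and $\gamma$ and $\overline{pq}$ are two monotone curves sharing the endpoints $p$ and $q$.

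The heart of the argument is then $B\le A$. For a point $s\in\overline{pq}$ I would take the grid point $r$ that the path places on, or immediately beside, the anti-diagonal $x+y=\text{const}$ through $s$ (recall that $S_\prec(p,q)$ has exactly one grid point on each anti-diagonal), and bound $d(s,r)$ through the horizontal/vertical offset between $\gamma$ and $\overline{pq}$ there, which is governed by $A$. The delicate point --- and the main obstacle --- is that this must deliver an \emph{actual grid point}, not merely a point of the continuous curve $\gamma$, within distance $A$: a Euclidean point can sit in a reentrant notch of the staircase and be strictly farther from every grid point than the extreme grid point is from $\overline{pq}$. This failure genuinely occurs for badly chosen orders, where a long monotone run abuts a corner of large deviation; so the inequality $B\le A$ cannot follow from monotonicity and (C1)--(C2) alone, and must use the balanced structure of the order $\prec$ fixed in this section. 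Concretely, the anti-diagonals on which $\gamma$ deviates most from $\overline{pq}$ are exactly those whose sum carries a high power of $2$, and these are isolated by many nearby turns of the path, so near every far corner the staircase zig-zags and supplies grid points on both sides of $s$. Turning ``isolated by many turns'' into the quantitative statement that the nearest grid point to an arbitrary $s$ provably lies within $A$ --- with the extremal case $p=(0,0),\,q=(2,2)$, where the midpoint of $\overline{pq}$ realises $B=A$, showing the bound is tight --- is the step that carries the real work; the reduction of the first paragraph and the convexity remark of the second are routine.
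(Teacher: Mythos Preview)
The paper gives no proof at all: the statement is labelled an \emph{Observation} and left unargued, so there is nothing to compare your argument to directly. Your reduction of the question to the one-sided inequality $B\le A$ is correct, and so is the convexity remark showing that the continuous staircase $\gamma$ lies in the $A$-strip around $\overline{pq}$.

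Your suspicion that $B\le A$ is \emph{not} a general fact about monotone lattice paths is well founded, and indeed it fails: for the path
\[
(0,0),\ (1,0),\ (2,0),\ (2,1),\ (3,1),\ (4,1)
\]
from $p=(0,0)$ to $q=(4,1)$ one has $A=2/\sqrt{17}\approx 0.485$, but the point $s=(3/2,3/8)\in\overline{pq}$ is at distance $5/8=0.625$ from every grid point of the path. So the equality in the Observation cannot follow from (C1)--(C2) and monotonicity alone, and the paper's casual treatment glosses over a genuine subtlety.

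That said, your proposal does not close the gap either. You correctly locate the difficulty --- the Euclidean point sitting in a notch between two distant corners --- and you assert that the balanced structure of the particular order $\prec$ of this section rules it out, but the step ``turning \emph{isolated by many turns} into the quantitative statement'' is announced and not performed. (A small slip: for $p=(0,0)$, $q=(2,2)$ the midpoint $(1,1)$ is itself a grid point of $S_\prec(p,q)$, so it does not realise $B$; the tight point is $(1/2,1/2)$.)

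For the paper's actual needs this is harmless. The Observation is only invoked in Lemma~\ref{lemma:subs} and in the proof of Theorem~\ref{t:main}, both of which are asymptotic; the elementary bound $B\le A+\tfrac12$ --- obtained by taking the nearer endpoint of the unit edge on which the perpendicular at $s$ meets $\gamma$ --- would serve equally well there. So as a proof of the Observation \emph{exactly as stated}, your attempt is incomplete, but you have put your finger on a point the authors appear to have overlooked.
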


We proceed by proving three statements that we will use to ultimately prove Theorem~\ref{t:main}.

\begin{lemma} \label{lemma:subs}
If $p,q \in \Z^2$ and $ r,s \in S_\prec(p,q)$. Then
\[
H(\overline{rs},S_\prec(r,s))
\leq 2H(\overline{pq},S_\prec(p,q)).
\]
\end{lemma}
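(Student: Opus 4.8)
The plan is to reduce the statement, via the unnamed Observation preceding the lemma, to a purely geometric estimate on individual points. Write $D := H(\overline{pq},S_\prec(p,q))$. That Observation gives $H(\overline{rs},S_\prec(r,s)) = \max\{d(t,\overline{rs}) : t \in S_\prec(r,s)\}$, so it suffices to prove $d(t,\overline{rs}) \leq 2D$ for every $t \in S_\prec(r,s)$. My first step is to show the containment $S_\prec(r,s) \subseteq S_\prec(p,q)$; once this is in hand, every such $t$ already satisfies $d(t,\overline{pq}) \leq D$, and the whole difficulty becomes comparing the distance of $t$ to the two nearly parallel Euclidean segments $\overline{pq}$ and $\overline{rs}$.

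For the containment I would use that ${\cal S}_\prec$ is a CDS (Theorem~\ref{theorem_order_to_CDS}) together with the monotone-staircase structure coming from the construction: each step increases $x+y$ by one (in the non-negative-slope case), so $S_\prec(p,q)$ meets every diagonal $\{x+y=m\}$ in exactly one point and is thus the graph of a coordinate-monotone staircase. Relabelling so that $r$ precedes $s$ along this staircase, (S3) applied to $s \in S_\prec(p,q)$ forces $S_\prec(p,s)$ to be exactly the initial portion of the staircase up to $s$, which therefore contains $r$; applying (S2) and (S3) once more gives $S_\prec(r,s) = S_\prec(s,r) \subseteq S_\prec(s,p) = S_\prec(p,s) \subseteq S_\prec(p,q)$. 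The degenerate horizontal or vertical case is covered by (S5), where $D=0$ and both sides vanish.

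The core step is the geometric estimate. Let $\ell$ be the line through $p$ and $q$ and let $\pi$ denote orthogonal projection onto $\ell$. By (C2) every point of $S_\prec(p,q)$ lies in the grid-parallel box spanned by $p$ and $q$, so its projection lands inside $\overline{pq}$ and its distance to $\overline{pq}$ equals its perpendicular distance to $\ell$; in particular $r$, $s$, and $t$ all lie in the band of half-width $D$ around $\ell$. Since the staircase is coordinate-monotone and $\ell$ has the same sign of slope, $\pi$ is monotone along the staircase from $r$ to $s$, so $\pi(t)$ lies between $\pi(r)$ and $\pi(s)$, and hence there is a point $u \in \overline{rs}$ with $\pi(u)=\pi(t)$. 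Then $t-u \perp \ell$, while both $t$ and $u$ (the latter a convex combination of $r$ and $s$) have perpendicular distance at most $D$ from $\ell$, so $d(t,\overline{rs}) \leq |t-u| \leq 2D$. Taking the maximum over $t \in S_\prec(r,s)$ yields the claim. I expect this projection argument to be the main point: the factor $2$ is forced precisely because $t$ and $u$ may lie on opposite sides of $\ell$, each contributing up to $D$.
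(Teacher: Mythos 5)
Your proof is correct, and it rests on the same underlying idea as the paper's: the point $t$ and the segment $\overline{rs}$ both live in a band of half-width $D=H(\overline{pq},S_\prec(p,q))$ around the line through $p$ and $q$, and a triangle inequality across that band produces the factor $2$. The execution, however, is genuinely different. The paper compares $t$ to the polyline $\overline{prsq}$: it observes $H(\overline{prsq},\overline{pq})\leq h$, bounds $d(t,\overline{prsq})\leq 2h$ via the foot of $t$ on $\overline{pq}$, and then invokes (C2) to assert $d(t,\overline{rs})=d(t,\overline{prsq})$, a step it does not actually justify. You instead project orthogonally onto the line $\ell$ through $p$ and $q$, use the coordinate-monotonicity of the staircase to place $\pi(t)$ between $\pi(r)$ and $\pi(s)$, and exhibit an explicit comparison point $u\in\overline{rs}$ with $t-u\perp\ell$, so that $|t-u|\leq d(t,\ell)+d(u,\ell)\leq 2D$; this replaces the paper's unproved identification of the nearest point on the polyline by a transparent computation. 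You also spell out the containment $S_\prec(r,s)\subseteq S_\prec(p,q)$ via (S2)/(S3) and the one-point-per-diagonal structure, which the paper simply asserts inline. Two cosmetic remarks: state explicitly that you reduce to the non-negative-slope case (the other case follows from the reflection built into the definition of $S_\prec$), since the monotonicity of $\pi$ along the staircase is argued only there; and note that your verification that the projection of the grid-parallel box onto $\ell$ equals $\overline{pq}$ is exactly where the non-negativity of the slope enters.
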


\begin{proof} We know that $d(r,\overline{pq}) \leq H(\overline{pq},S_\prec(p,q)) =:
h$ and $d(s,\overline{pq}) \leq h$, therefore $H(\overline{prsq},\overline{pq})
\leq h$. Hence,  for all $v \in \overline{pq}$, $d(v,\overline{prsq}) \leq h$.
Let $t \in S_\prec(r,s) \subseteq S_\prec(p,q)$ and $v \in
\overline{pq}$ be such that $d(t,v) = d(t,\overline{pq}) \leq h$. Using the
triangle inequality we conclude $d(t,\overline{prsq}) \leq d(t,v) +
d(v,\overline{prsq}) \leq 2h$. Because of (C2), we have $d(t,\overline{rs}) =
d(t,\overline{prsq}) \leq 2h$, and therefore $H(S_\prec(r,s),\overline{rs}) \leq 2h$.
\end{proof}

\begin{lemma} \label{lemma_c}
Let $p,q,r,r' \in \Z^2$, such that $r_x - p_x =
q_x - r'_x + \varepsilon$, $r_y - p_y = q_y - r_y' - \varepsilon$, with
$\varepsilon \in\{ 0, 1,- 1\}$, $r'_x = r_x$ and $r'_y = r_y+1$. Then
$H(\overline{pq},\overline{prr'q}) \leq c = \sqrt{5}/2$.
\end{lemma}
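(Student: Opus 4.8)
The plan is to reduce everything to the single observation that the two bend points $r$ and $r'$ each lie within distance $c=\sqrt5/2$ of the midpoint $M:=\frac12(p+q)$ of $\overline{pq}$; once that is established, both directions of the Hausdorff distance fall out by comparing matching points along the two polygonal paths.

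First I would rewrite the four hypotheses in terms of $M$. Since $r'_x=r_x$ and $r'_y=r_y+1$, the relations $r_x-p_x=q_x-r'_x+\varepsilon$ and $r_y-p_y=q_y-r'_y-\varepsilon$ become $2r_x=p_x+q_x+\varepsilon$ and $2r_y=p_y+q_y-1-\varepsilon$, that is, $r-M=(\varepsilon/2,\,-(1+\varepsilon)/2)$ and $r'-M=(\varepsilon/2,\,(1-\varepsilon)/2)$. A one-line check over $\varepsilon\in\{0,1,-1\}$ then gives $d(r,M)^2=(2\varepsilon^2+2\varepsilon+1)/4\le 5/4$ and $d(r',M)^2=(2\varepsilon^2-2\varepsilon+1)/4\le 5/4$, so $d(r,M)\le c$ and $d(r',M)\le c$, with equality at $\varepsilon=1$ for $r$ and at $\varepsilon=-1$ for $r'$.

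Next I would bound $\sup_{w\in\overline{prr'q}}d(w,\overline{pq})$. Writing $\overline{pq}=\overline{pM}\cup\overline{Mq}$ and noting that $(1-s)p+sM=p+\frac{s}{2}(q-p)\in\overline{pq}$ for $s\in[0,1]$ (and symmetrically for $\overline{Mq}$): a point $w=(1-s)p+sr\in\overline{pr}$ is compared with $(1-s)p+sM\in\overline{pq}$, at distance $s\,d(r,M)\le c$; the edge $\overline{r'q}$ is handled symmetrically against $(1-s)M+sq$; and for $w$ on the unit edge $\overline{rr'}$ the function $w\mapsto d(w,M)$ is convex along $\overline{rr'}$, hence bounded by $\max\{d(r,M),d(r',M)\}\le c$, and $M\in\overline{pq}$. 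The reverse bound $\sup_{v\in\overline{pq}}d(v,\overline{prr'q})$ is obtained in the same way: $(1-s)p+sM\in\overline{pM}$ lies within $s\,d(M,r)\le c$ of $(1-s)p+sr\in\overline{pr}$, and $(1-s)M+sq\in\overline{Mq}$ lies within $(1-s)\,d(M,r')\le c$ of $(1-s)r'+sq\in\overline{r'q}$. Taking the maximum of the two suprema gives $H(\overline{pq},\overline{prr'q})\le c=\sqrt5/2$.

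I do not expect a genuine obstacle here; the argument is essentially a routine calculation. The only points that need a moment's care are verifying that the comparison points really lie on the sub-segments $\overline{pM}$ and $\overline{Mq}$ of $\overline{pq}$ (which is exactly why it is worth spelling out the identity $(1-s)p+sM=p+\frac{s}{2}(q-p)$), and treating the short middle edge $\overline{rr'}$ via convexity of the distance-to-$M$ function rather than via a projection argument. Since the bound $c=\sqrt5/2$ is attained, this estimate is tight.
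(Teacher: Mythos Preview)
Your proof is correct and takes a genuinely different route from the paper's. The paper translates to $p=(0,0)$, asserts $H(\overline{pq},\overline{prr'q})=\max\{d(r,\overline{pq}),d(r',\overline{pq})\}$, computes these two perpendicular distances explicitly as functions of $q_x,q_y,\varepsilon$, substitutes $x=q_x/q_y$, and then maximizes $f(x)=(x+1/2)/\sqrt{x^2+1}$ by calculus to get $\sqrt5/2$ at $x=2$. Your approach instead pins everything to the midpoint $M$: the hypotheses force $r-M$ and $r'-M$ to lie in the finite set $\{(\varepsilon/2,\pm(1\mp\varepsilon)/2):\varepsilon\in\{0,\pm1\}\}$, whose radii are at most $\sqrt5/2$ by inspection, and then a convex-interpolation pairing handles both Hausdorff directions without any optimisation. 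What you gain is an entirely elementary argument with no calculus and an explicit treatment of the reverse direction $\sup_{v\in\overline{pq}}d(v,\overline{prr'q})$, which the paper leaves implicit in its equality claim. What the paper's route gains is that it computes the actual distances $d(r,\overline{pq})$, $d(r',\overline{pq})$ rather than the looser surrogates $d(r,M)$, $d(r',M)$; in particular, your closing remark that the bound is ``attained'' refers to $d(r,M)=\sqrt5/2$ at $\varepsilon=1$, but this does not by itself show $H=\sqrt5/2$ for any integer configuration (indeed $q_x/q_y=2$ together with the parity constraints from $\varepsilon=1$ has no integer solution), so the constant is a supremum rather than a maximum---a harmless imprecision that does not affect the lemma.
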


\begin{proof} Without loss of generality $p = (0,0)$. We have
\[
  H(\overline{pq},\overline{prr'q}) = \max\{ d(r,\overline{pq}),d(r',\overline{pq}) \}.
\]
By assumption $ 2r_x - \varepsilon = q_x $ and $ 2 r_y
+ \varepsilon + 1 = q_y $. So we get
\[
\begin{array}{rcl}
    d(r,\overline{pq}) &\!\!\!\!=\!\!\!\!& \frac{q_y r_x - q_x r_y}{\sqrt{q_x^2 + q_y^2}} \\
    &\!\!\!\!=\!\!\!\!& \frac{q_y \frac{q_x+\varepsilon}{2} - q_x \frac{q_y-\varepsilon-1}{2}}{\sqrt{q_x^2 + q_y^2}}\\
    &\!\!\!\!=\!\!\!\!& \frac{1}{2\sqrt{q_x^2 + q_y^2}} (q_y \varepsilon + q_x \varepsilon + q_x ).
\end{array}
\]
Similarly,
\[
  d(r',\overline{pq})
  = \frac{1}{2\sqrt{q_x^2 + q_y^2}} ( q_y \varepsilon + q_x \varepsilon - q_x ).
\]
Setting $ x := q_x/q_y$, we observe
\[
\begin{array}{rcl}
    H(\overline{pq},\overline{prr'q}) &\!\!\!\!\leq\!\!\!\!& \frac{1}{2\sqrt{(xq_y)^2 + q_y^2}} ( q_y + 2 x q_y ) \\
    &\!\!\!\!=\!\!\!\!& \frac{x+1/2}{ \sqrt{x^2+1} } \\
    &\!\!\!\!\leq\!\!\!\!& \sqrt{5}/2,
\end{array}
\]
as the function
\[
f(x) = \frac{x+1/2}{\sqrt{x^2+1}}
\]
attains its global maximum at $x=2$.
\end{proof}

The following lemma is a statement about the order $\prec$ and will be the key ingredient of the proof of Theorem~\ref{t:main}.

\begin{lemma} \label{lemma_sym}
Let $\{x\in\Z | A \leq x < B\}$ be an interval of integers with the following properties:
\begin{itemize}
\item[(i)] Its number of elements is $B - A = 2^{k+1} - 1 $ for some number $k$.
\item[(ii)] $|(A+B-1)/2|_2 \geq k$ and $|x|_2 < k$, for any other $A \leq x < B$, $x \neq (A+B-1)/2$.
\end{itemize}
Let $x_1 \prec x_2 \prec \ldots \prec x_{2^{k+1} - 1}$ be the elements of the interval sorted
in increasing order according to $\prec$.

Then elements from the left half and elements from the right half of the interval alternate,
that is, if $x_i < (A+B-1)/2$ for some $1 \leq i < 2^{k+1} - 1$, then $x_{i+1} \geq (A+B-1)/2$.
\end{lemma}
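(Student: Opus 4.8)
The plan is to analyze the order $\prec$ restricted to the interval $I = \{x : A \le x < B\}$ by tracking, for each element $x$, the value $|x|_2$ and then the sequence of values $|x-1|_2, |x-2|_2, \dots$ used to break ties. Let $m := (A+B-1)/2$ be the midpoint. First I would observe that since $B - A = 2^{k+1}-1$, the interval has $2^k$ elements in its left half $L = \{x : A \le x < m\}$ together with $m$ itself and $2^k - 1$ elements in the right half $R = \{x : m < x < B\}$ — wait, more carefully, the left half $\{x < m\}$ and the right half $\{x > m\}$ each have $2^k - 1$ elements, with $m$ in the middle. By hypothesis (ii), $m$ is the unique $\prec$-largest element of $I$, so $x_{2^{k+1}-1} = m$ and the claim is really about the remaining $2^{k+1}-2$ elements, which split evenly: $2^k - 1$ on each side.

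The key structural idea I would pursue is a reflection symmetry. Consider the involution $\sigma(x) = A + B - 1 - x$ on $I$; this swaps the left half with the right half and fixes $m$. I would try to prove the stronger statement: for $x, y \in I \setminus \{m\}$ with $\sigma(x) = y$ (i.e., $x$ and $y$ are mirror images across $m$), the elements $x$ and $y$ are \emph{consecutive} in the $\prec$-order on $I \setminus \{m\}$. Since one of each such mirror pair lies in $L$ and the other in $R$, this immediately gives the alternation claimed. To establish that mirror pairs are $\prec$-consecutive, I would show that for any $z \in I$ with $z \ne x, y, m$, either $z \prec x$ and $z \prec y$, or $x \prec z$ and $y \prec z$ — in other words, no third element of $I$ can fall $\prec$-between $x$ and $\sigma(x)$. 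The natural way to do this is to compare the defining sequences: write out $|x|_2, |x-1|_2, \dots$ versus $|z|_2, |z-1|_2, \dots$, and similarly for $y$; I expect that the first coordinate where $x$'s sequence differs from $z$'s sequence agrees with the first coordinate where $y$'s sequence differs from $z$'s, and with the same direction of inequality, as long as we have not yet ``walked past'' the left end $A$ of the interval. Hypothesis (ii), which says $m$ carries a strictly higher power of $2$ than everything else in $I$ and that this already shows up at step $0$ (no subtraction needed), is what guarantees that all the relevant tie-breaking comparisons resolve \emph{before} the subtraction process exits the interval, so the mirror symmetry of $I$ around $m$ is the only thing that matters.

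Concretely, I would set up coordinates so that $m - x = m - \sigma(x) =: \delta > 0$ for a mirror pair, and note that within $I$ the situation after subtracting $j$ is governed by $|m - j|_2$ versus $|x - j|_2$ and $|y-j|_2 = |(m + \delta) - j|_2$. The point is that $|{(m+\delta)} - j|_2$ and $|{(m-\delta)} - j|_2$ need not be equal in general, but relative to a comparison element $z$ with $|z - j|_2$ the \emph{first} index $j$ at which a strict inequality occurs is the same for both and has the same sign; this is where I would invoke hypothesis (ii) to say that $|m|_2 \ge k$ forces $m \pm \delta$ to have $2$-adic valuations $< k$ and that the binary carry structure around the heavily-divisible point $m$ makes $m+\delta$ and $m-\delta$ behave symmetrically with respect to divisibility by small powers of $2$. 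I anticipate that the main obstacle is exactly this last step: carefully verifying that the tie-breaking comparison of $x$ against an arbitrary $z$ and of $\sigma(x)$ against the same $z$ always resolve in tandem, which amounts to a somewhat delicate induction on the number of subtraction steps, using that once we subtract enough to reach or pass $A$ we are done because $m$'s high divisibility (relative to the whole interval) has already settled everything. If the clean ``mirror pairs are consecutive'' statement turns out to be false in some edge case, the fallback is to directly compare $x_i$ and $x_{i+1}$: assuming both lie in $L$, derive a contradiction by exhibiting the mirror image $\sigma(x_i) \in R$ and showing it must satisfy $x_i \prec \sigma(x_i) \prec x_{i+1}$ or vice versa, contradicting consecutiveness.
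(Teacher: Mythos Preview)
Your central claim---that mirror pairs $(x,\sigma(x))$ with $\sigma(x)=A+B-1-x$ are $\prec$-consecutive in $I\setminus\{m\}$---is false, not just in an edge case. Take $k=2$, $M=0$, so $I=\{-3,-2,-1,0,1,2,3\}$; hypothesis~(ii) holds since $|0|_2=\infty$ and all other elements have $|x|_2\le 1$. Computing directly, the odd elements in increasing $\prec$-order are $-1\prec 3\prec -3\prec 1$. The mirror pair $(-3,3)$ is indeed consecutive, but the mirror pair $(-1,1)$ sits at the two ends with $3$ and $-3$ strictly between them. So reflection around $M$ is the wrong pairing, and your fallback (still based on $\sigma$) inherits the same defect: in this example $\sigma(-1)=1$ does not fall $\prec$-between $-1$ and the next element $3$.

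The correct partner of $x<M$ is the \emph{translate} $x+2^k$, not the reflection $2M-x$. The reason is the identity $|x-j|_2=|(x+2^k)-j|_2$ whenever $|x-j|_2<k$, so the two defining sequences for $\prec$ agree step by step until the subtraction reaches an element of $2$-adic valuation at least $k$; by hypothesis~(ii) this first happens at $M$ or at $A-1$, and at that step they split with the desired inequality. No such identity ties $2M-x$ to $x$. This translation argument is exactly what the paper does: assuming $x_i,x_{i+1}<M$, it inserts either $x_i+2^k$ or $x_{i+1}+2^k$ (depending on whether $|A-1|_2<|M|_2$ or $|A-1|_2>|M|_2$) strictly between them in $\prec$, contradicting consecutiveness. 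In the example above the translation pairs are $(-1,3)$ and $(-3,1)$, and both are indeed $\prec$-consecutive.
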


\begin{proof}
Define $M:=(A+B-1)/2$. Assume for a contradiction that there is an $1 \leq i < 2^{k+1} - 1$ such that
both $x_i < M$ and $x_{i+1} < M$.
First we look at the case that $|A-1|_2 < |M|_2$.
Then on one hand, $x_i \prec x_i + 2^k$, because $|x_i-j|_2 = |x_i+2^k-j|_2$ for all $0 \leq j \leq x_i - A$
and $|x_i-j|_2 = |A-1|_2 < |M|_2 = |x_i+2^k-j|_2$ for $j=x_i-(A-1)$. (We use the simple observation that if $|x|_2 < k$ then $|x+2^k|_2 = |x|_2$.)
But on the other hand, $x_i + 2^k \prec x_{i+1}$, because there is a $0 \leq j_0 \leq x_i - A$ such that $|x_i+2^k-j_0|_2 = |x_i-j_0|_2 < |x_{i+1}-j_0|_2$ and $|x_i+2^k-j|_2 = |x_i-j|_2 = |x_{i+1}-j|_2$ for all $0 \leq j < j_0$. So $x_i \prec x_i + 2^k \prec x_{i+1}$, a contradiction.

In the case $|A-1|_2 > |M|_2$, we can argue similarly that $x_i \prec x_{i+1}+2^k \prec x_{i+1}$. (Note that equality never occurs, as $|A-1|_2 = |M|_2$ implies $|(A-1+M)/2|_2 \geq |M|_2 \geq k$, but $|(A-1+M)/2|_2 < k$ by assumption.)

We have shown that if $x_i < M$, then $x_{i+1} \geq M$. Similarly we can show that $x_i > M$ implies $x_{i+1} \leq M$. So the elements to the left and
to the right of $M$ alternate.
\end{proof}

\begin{proof}(of Theorem~\ref{t:main}) Let $p,q \in \Z^2$. We may assume that
$p_x < q_x$ and $p_y < q_y$. We are going to prove that $H(\overline{pq},S_\prec(p,q))
\leq 2c\log(p_x+p_y-q_x-q_y) $ for $c=\sqrt{5}/2$. Let $r \in S_\prec(p,q)$ be the
point with the property that $r_x + r_y$ is the greatest element of the segment
interval, that is, $ r_x + r_y \succ s $ for all $s \in [p_x+p_y, q_x+q_y)$, $s \neq r_x + r_y$,
see Figure~\ref{fig_splitting_at_greatest} for an example. Now let $s'$ be the
second greatest element of the segment interval according to $\prec$. Define
$k:=|s'|_2+1$.

We can extend the segment $S_\prec(p,q)$ over both endpoints, moving both $p$ and $q$
 such that $ |p_x + p_y -
1|_2 \geq k $ and $ |q_x + q_y|_2 \geq k$, that is, we extend the segment as
far as we can, so that $k$, defined as above, remains unchanged.
From Lemma \ref{lemma:subs} we get that
by this extension we decreased the Hausdorff distance by
at most a factor of $2$. Now the segment interval contains exactly $2^{k+1}-1$
elements and $r_x+r_y$ is the element in the very middle. We call such a segment \emph{normalized}.

We are going to proceed by induction on $k$ to prove that for all normalized
digital line segments $ H(\overline{pq},S_\prec(p,q)) \leq ck $ with $c = \sqrt{5}/2$.
This will prove the theorem, as $k + 1 = \log(p_x+p_y-q_x-q_y + 1)$, the distance
of the unnormalized original segment (we started from) is at most $2ck = 2c(\log(p_x+p_y-q_x-q_y +
1) - 1) \leq 2c\log(p_x+p_y-q_x-q_y)$.

\begin{figure}
\centering
\includegraphics{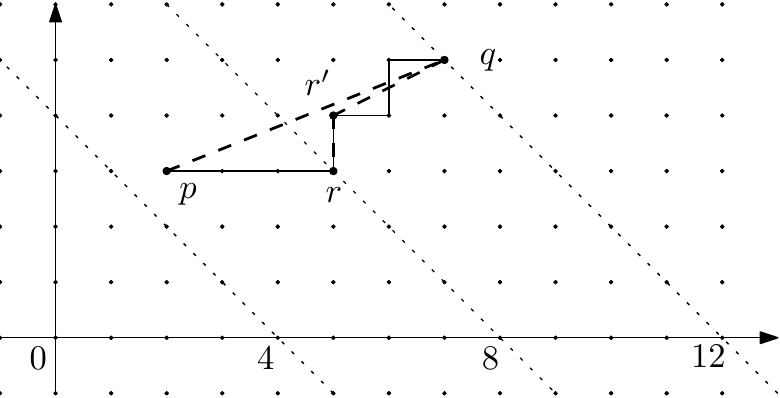}
\caption{\label{fig_splitting_at_greatest}
The digital line segment from $p = (2,3)$ to $q = (7,5)$, with the
ordered segment interval $8 \succ 10 \succ 6 \succ 9 \succ 5 \succ 11 \succ 7$.
The Euclidean segments $\overline{pq}$ and $\overline{prr'q}$ (dashed) can have
Hausdorff distance at most $c$, as shown in Lemma~\ref{lemma_c}. }
\end{figure}

In the base case $k = 1$, the segment interval consists of $3$ numbers,
so $S_\prec(p,q)$ is a path of length $3$ and by checking all possibilities we
see that $H(\overline{pq},S_\prec(p,q)) < c$.

If $k > 1$, the idea is to split the segment at $r$ into two subsegments which
are similar in some sense and apply induction. Let $r'=(r_x,r_y+1)$ be the point
that comes after $r$ in the segment $S_\prec(p,q)$. (We know that we go up at
$r$, because we go up at least once and $r_x + r_y$ is the greatest element of
the segment interval). Consider the subsegments $S_\prec(p,r)$ and $S_\prec(r',q)$ and
partition the segment interval accordingly. The key observation is that picking
the elements of the interval according to $\prec$ starting with the
greatest, we first get $r$, and then alternately an element of the left and the right
subsegment interval. This is shown in Lemma~\ref{lemma_sym} setting $A = p_x + p_y$,
$B = q_x + q_y$.
Therefore, up to a
difference of at most one, half of the $q_y-p_y-1$ greatest elements (after $r_x + r_y$) belong to $[p_x+p_y,r_x+r_y)$ and half
of them to $[r_x+r_y+1,q_x+q_y)$. This implies that $p,q,r,r'$ meet the conditions
of Lemma \ref{lemma_c}, leading to
$H(\overline{pq},\overline{prr'q}) \leq c$.
By the
induction hypothesis we have $H(\overline{pr},S_\prec(p,r)) \leq c(k-1)$ and
$H(\overline{r'q},S_\prec(r',q)) \leq c(k-1)$.
Now
$H(\overline{prr'q},S_\prec(p,q)) =
\max\{H(\overline{pr},S_\prec(p,r)),H(\overline{r'q},S_\prec(r',q))\} \leq c(k-1)$. Using
the triangle inequality we conclude $H(\overline{pq},S_\prec(p,q)) \leq ck$.
\end{proof}

\section{A step towards a characterization of CDSes} \label{s:char}

Now we approach the same problem from a different angle, taking arbitrary CDSes and trying to
find some common patterns in their structure. Knowing that condition (C1) holds for all CDSes,
it is easy to verify that we can analyze the segments with non-positive and non-negative
slopes separately, as they are completely independent. More precisely, the union of any CDS on
segments with non-positive slopes and another CDS on segments with non-negative slopes is
automatically a CDS. Having this in mind, in this section we will proceed with the analysis of
only one half of a CDS, namely of segments with non-negative slope.


We will show that, in a CDS, all the segments with non-negative slope emanating
from a fixed point must be derived from a total order. However, as we will
show later, these orders may differ for different points.




\begin{theorem} \label{prop_order_at_point}
For any CDS and for any point $p = (p_x,p_y) \in \Z^2$, there is a total order $\prec_p$ that
is uniquely defined on both $(-\infty,p_x+p_y-1]$ and $[p_x+p_y,+\infty)$, such that the
segments with non-negative slope emanating from $p$ are derived from $\prec_p$ (in the way
described in Section~\ref{s:def}).\end{theorem}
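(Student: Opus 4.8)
The plan is to extract the order $\prec_p$ directly from the family of digital rays emanating from $p$ and then show these rays are exactly the ones this order generates. Fix $p$ and consider segments with non-negative slope; by diagonal translation invariance (in spirit) and symmetry we may focus on the quadrant $x \geq p_x$, $y \geq p_y$. For a point $q$ in this quadrant, $S(p,q)$ is a monotone staircase path, so it is completely determined by the set of sums $x+y$ (in the segment interval $[p_x+p_y, q_x+q_y-1]$) at which the path steps \emph{up} rather than \emph{right}. First I would argue, using (S3) and (S4), that these ``up-sets'' are nested and coherent: if I take a long ray emanating from $p$ (using (S4) repeatedly to prolong), then for every integer $m \geq p_x+p_y$ the ray either goes up or goes right at level $m$, and by the subsegment property every $S(p,q)$ with $q$ on that ray agrees with this choice on its own segment interval. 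The delicate point is that a priori different prolongations could disagree; here I would use (S3) together with (C3)/(S4) to show that the choice at level $m$, \emph{given} the choices at all levels below $m$ that are relevant, is forced, so that one obtains a single infinite ray $R_p^+$ through the quadrant whose every initial staircase is the corresponding $S(p,q)$.

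Given the infinite ray $R_p^+$, I would define $\prec_p$ on $[p_x+p_y,+\infty)$ by declaring, for two levels $m < m'$, that $m'$ is larger if we can find a segment of the CDS whose segment interval contains both $m$ and $m'$ and which goes up at $m'$ and right at $m$ — and larger the other way otherwise. The work is to check this is well-defined (independent of the witnessing segment) and is a genuine total order. Transitivity is the crux: if $m_1$ beats $m_2$ and $m_2$ beats $m_3$, I need a single segment witnessing $m_1$ beats $m_3$; I would obtain it by taking a point $q$ far enough out on $R_p^+$ that its segment interval contains all three levels, and then arguing via (S3) that the up/right pattern of $S(p,q)$ restricted to $\{m_1,m_2,m_3\}$ is consistent with the pairwise comparisons — essentially, that in any staircase the levels at which we go up form a set that is ``$\prec_p$-upward closed'' within the segment interval. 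This last fact is precisely the splitting argument used in the proof of Theorem~\ref{theorem_order_to_CDS}, run in reverse: two initial staircases that split and re-merge would force a contradictory comparison, exactly as in Figure~\ref{fig_subsegment_property}.

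Once $\prec_p$ is a total order on $[p_x+p_y,+\infty)$, it is immediate that every $S(p,q)$ with $q$ in the upper-right quadrant from $p$ is derived from $\prec_p$ in the sense of Section~\ref{s:def}: its segment interval $I = [p_x+p_y, q_x+q_y-1]$ has exactly $q_y-p_y$ levels at which the path goes up, and by the upward-closedness just established these are precisely the $q_y-p_y$ largest elements of $I$ under $\prec_p$. The segments into the lower-left quadrant from $p$ (still non-negative slope) are handled by symmetry (S2): $S(p,q) = S(q,p)$, and $q$ lies in the upper-right quadrant from the point $q$... more precisely, these are governed by the order on $(-\infty, p_x+p_y-1]$, which I would define by running the identical argument on the rays emanating from $p$ in the down-left direction, or equivalently by reflecting through $p$. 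The two orders need not extend each other, which is why the statement keeps them separate; I would include a one-line remark noting that no compatibility between them is claimed or needed. The main obstacle, as flagged, is the well-definedness and transitivity of $\prec_p$ — showing that the purely local up/right choices of all the segments through $p$ cohere into one global linear order — and this is where axioms (S3) and (S4) do the heavy lifting.
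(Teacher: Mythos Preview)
Your proposal has a genuine gap at exactly the point you flag as delicate. The ``split-and-remerge'' argument from the proof of Theorem~\ref{theorem_order_to_CDS} (Figure~\ref{fig_subsegment_property}) requires that the two paths in question share \emph{both} endpoints, so that (S3) forces them to coincide; the contradiction comes from assuming they split and then meet again. But to establish asymmetry (well-definedness) of $\prec_p$ you must compare two segments $S(p,q)$ and $S(p,q')$ with \emph{different} upper-right endpoints, and once these split at some level they need never meet again---so that argument simply does not apply. The same issue undermines your transitivity plan: there is no single infinite ray $R_p^+$ whose initial staircases are \emph{all} the $S(p,q)$; the segments from $p$ form a tree, not a path, and a segment long enough to contain levels $m_1,m_2,m_3$ need not reproduce the up/right choices of the shorter witnessing segments.

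What the paper supplies in place of this is a pigeonhole lemma at a single diagonal level (Lemma~\ref{lemma_aux}): it is impossible that one segment from $p$ goes up at $(a,C-a)$ while another goes right at $(b,C-b)$ with $a>b$. The proof considers the $a-b+1$ segments from $p$ to the intermediate points on the line $x+y=C$, extends each by (S4), and observes by (C3) that the extensions must hit $a-b+1$ distinct points on $x+y=C+1$, whereas only $a-b$ are available. This lemma immediately gives asymmetry of $\prec_p$; transitivity then follows by extending a witnessing segment and invoking the asymmetry just proved, and totality by another pigeonhole count producing a segment that splits at a prescribed level. Your outline reaches for (C3) and (S4) but never isolates this counting step, which is the actual engine of the proof.
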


\begin{proof}
We fix a CDS $\cal S$ and a point $p$. The segments with non-negative slope with
$p=(p_x,p_y)$ as their upper-right point will induce an order on the integers
smaller than $p_x+p_y$, and the segments for which $p$ is the lower-left endpoint will
induce an order on the rest of the integers. In the following we will just look at
the latter type of segments. First, we prove an auxiliary statement.

\begin{lemma} \label{lemma_aux}
In a CDS, it cannot happen that for two segments with non-negative
slope having the same lower-left endpoint $p$, one of them goes up at $(a,C-a)$
and the other goes right at $(b,C-b)$, for some $C$ and $a>b$.
\end{lemma}

\begin{proof} (of Lemma~\ref{lemma_aux})
Let us, for a
contradiction, assume the opposite, see Figure~\ref{fig_induced_order}.
Now, we look at the $a-b+1$ segments between the point $p$
and each of the points on the line $x+y=C$ between the points $(a,C-a)$ and
$(b,C-b)$. It is possible to extend all of them through their upper-right
endpoints, applying (S4). Note that each of the extended segments goes through a
different point on the line $x+y=C$, and hence, because of condition (C3), no two
of them can go through the same point on the line $x+y=C+1$. But, there are only
$a-b$ available points on the line $x+y=C+1$ between the points $(a,C-a+1)$ and
$(b+1,C-b)$, one less than the number of segments, a contradiction.
\end{proof}

\begin{figure}
\centering
\includegraphics{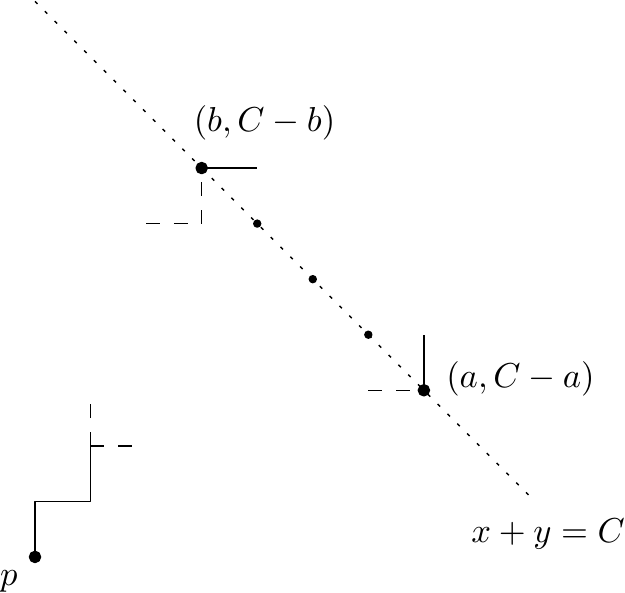}
\caption{Two segments having $p$ as their lower-left endpoint, one of them going up at $(a,C-a)$
and the other going right at $(b,C-b)$, with $a>b$.\label{fig_induced_order}}
\end{figure}

Now, we define the relation $\prec_p$ in the following way.
Whenever there is a segment in $\cal S$ with non-negative slope starting at $p$, going right at a point $(x,D-x)$
and going up at a point $(x',E-x')$, for some $x$ and $x'$, we set $D \prec_p E$.
This way we defined a relation on $[p_x+p_y,+\infty)$, which is obviously irreflexive. To show that $\prec_p$ is a total
order, it remains to prove that it is asymmetric, transitive and total.

To show asymmetry, assume for a contradiction that for some integers $D$ and $E$ we
have both $D \prec_p E$ and $E \prec_p D$. That can happen
only when there are two segments with non-negative slope having $p$ as their
lower-left endpoint, such that on the line $x+y=D$ one of them goes up, the
other right, and then on $x+y=E$ they both go in different direction than at
$x+y=D$. But then the situation described in Lemma~\ref{lemma_aux} must occur
on one of the two lines, a contradiction.

Next, if $ C \prec_p D $ and $ D \prec_p E $, then we also have $ C \prec_p E $ -- we just take a
segment starting from $p$ that goes right at $C$ and up at $D$, and (if necessary) extend it
until it passes the line $x+y=E$. It must also go up at $E$, because of $ D \prec_p E $ and
the asymmetry of $\prec_p$. Hence, the relation $\prec_p$ is transitive.

It remains to prove the totality of $\prec_p$. That is, for any pair of integers $p_x + p_y \leq D < E$, either $D \prec_p E$
or $E \prec_p D$ holds. Consider a segment from $p$ to some point $q$
on the line $x+y = E$, such that this segment splits at $q$, that is, there
are two extensions of the segment, one going up and another one going right.
Such segment exists since in the upper-right quadrant of $p$, the line $x+y = E + 1$ contains one more point than
the line $x+y = E$. If we look at all the segments between $p$ and the points on $x+y = E + 1$, the pigeonhole
principle ensures that two of them will contain the same point $q$ on the line $x+y = E$.
Now the segment $S(p,q)$ crosses the line $x+y=D$ at some point $q'$.
Depending on whether it goes up or right at the point $q'$, either $E \prec_p D$
or $D \prec_p E$ holds.
\end{proof}


To see that these orders can differ for different points, consider the following
example of a CDS, which we call the \emph{waterline example} because of the
special role of the $x$-axis. To connect two points with a segment, we do the following.
Above the $x$-axis we go first right, then up, below the
$x$-axis we go first up, then right, and when we have to traverse the $x$-axis,
we go straight up to it, then travel on it to the right, and finally continue up, see
Figure~\ref{fig_waterline_example}. It is easy to check that this construction
satisfies all five axioms.

\begin{figure}
\centering
\includegraphics{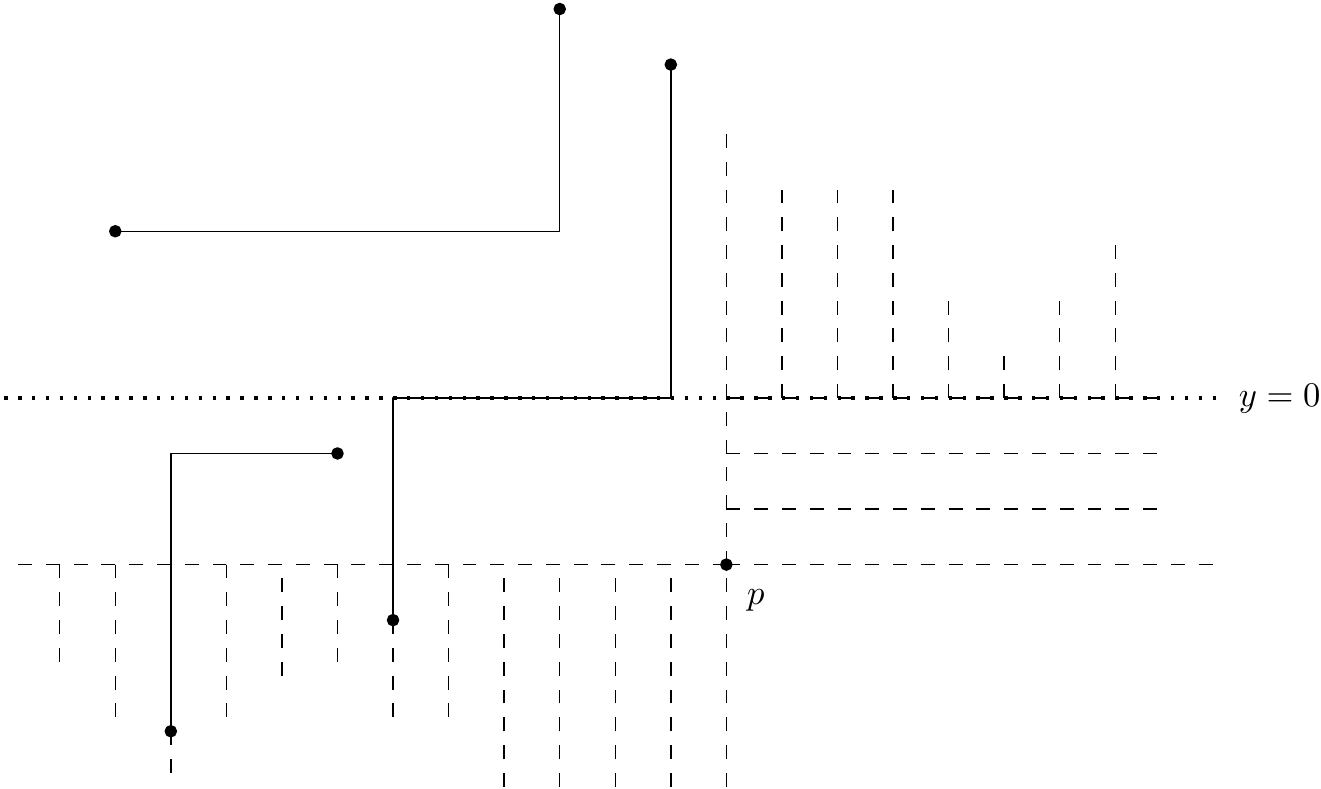}
\caption{\label{fig_waterline_example}
The waterline example: Examples of three characteristic segments, and the rays
emanating from a point $p$, which is below the waterline
}
\end{figure}

Now, if we consider a point $p=(a,b)$ below the waterline, $b < 0$, the
induced total order $\prec_p$ on $[a+b,+\infty)$
is $a \prec_p a+1 \prec_p \ldots \prec_p (+\infty) \prec_p a-1 \prec_p a-2 \prec_p \ldots \prec_p a+b $,
and the order on $(-\infty,a+b-1]$ is $ a+b-1 \prec_p a+b-2 \prec_p \ldots \prec_p -\infty$.
If $p$ is above the waterline, $b \geq 0$, the induced total orders are $a-1 \prec a-2 \prec \ldots \prec (-\infty) \prec a \prec a+1 \prec \ldots \prec a+b-1 $ and $ a+b \prec a+b+1 \prec \ldots \prec +\infty$.
Obviously, there is no total order on $\Z$ compatible with these orders for all possible choices of $p$.


The special role played by the $x$-axis in the waterline example
can be fulfilled by any other monotone digital line with a positive slope;
above the line go right, then up, below the line go up, then right,
and whenever the line is hit, follow it until either the $x$- or the $y$-coordinate
matches that of the final destination, see Figure~\ref{fig_exotic_example}.
Again, it is straightforward to show that this way we obtain a CDS.



\begin{figure}
\centering
\includegraphics{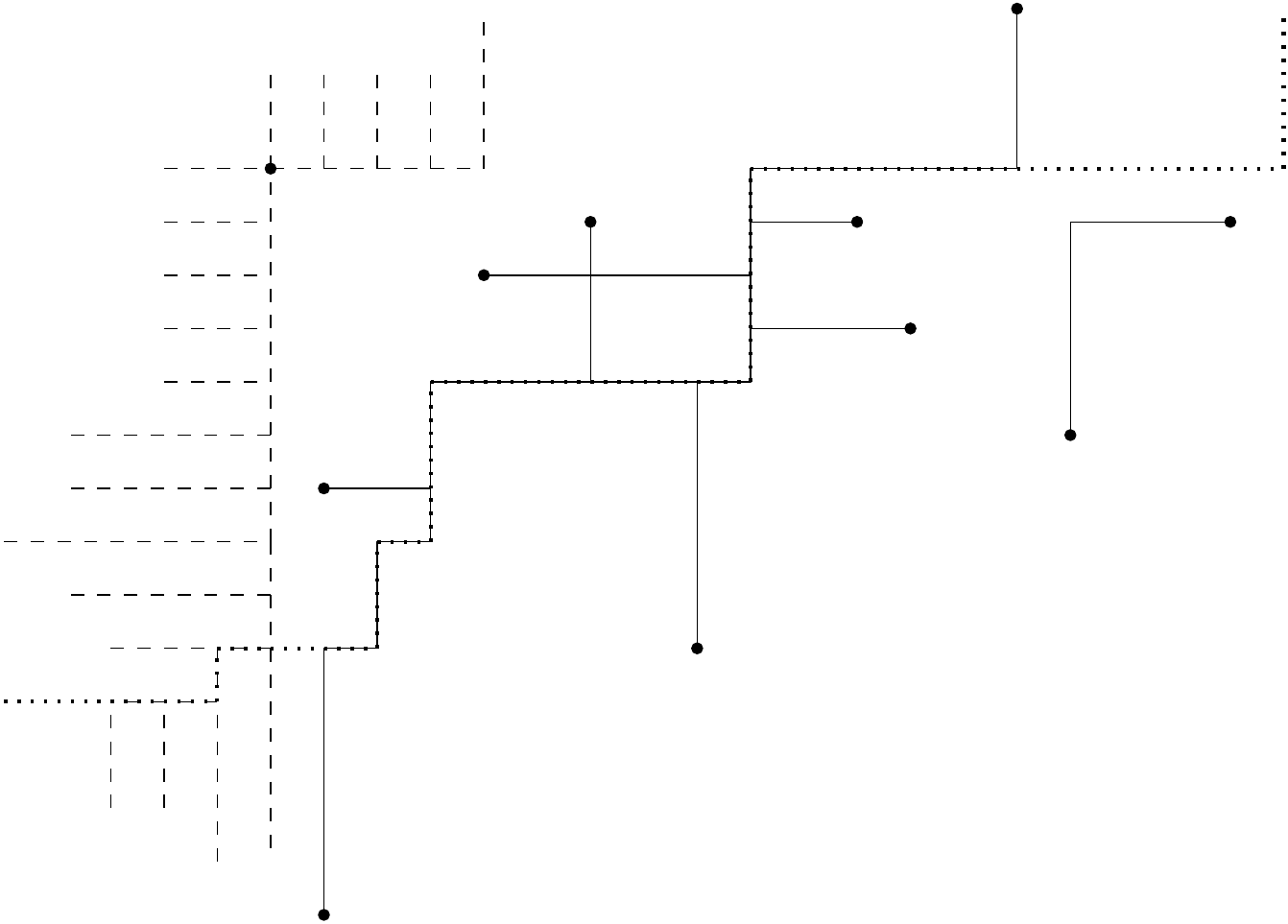}
\caption{A more exotic example of a CDS with an arbitrary ``special'' line (bold and dotted).
\label{fig_exotic_example}}
\end{figure}





A way to see that such a CDS cannot be derived from a total order is to
observe that now the diagonal translation of a digital line segment does not always
yield another digital line segment, that is, these examples do not satisfy the condition from
Observation~\ref{obs_diagonal_translation_invariance}.
Actually, we can prove that it suffices to add this condition
to force a unifying total order on all integers.

\begin{theorem}
    Let $\cal S$ be a CDS, such that for any $t \in \Z$ and any $p,q \in \Z^2$,
\[
    S(p+(t,-t),q+(t,-t)) = S(p,q) + (t,-t).
\]
    Then there is
    a unique total order $\prec$ on $\Z$ such that $\cal S = \cal S_{\prec}$.
\end{theorem}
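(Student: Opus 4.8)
The strategy is to first use Theorem~\ref{prop_order_at_point} to extract, for each point $p$, the total orders $\prec_p$ on $(-\infty,p_x+p_y-1]$ and on $[p_x+p_y,+\infty)$ governing the non-negative-slope segments through $p$; then to show the diagonal-translation-invariance hypothesis forces all these partial orders to agree on overlaps, so they glue into a single total order $\prec$ on all of $\Z$. First I would fix $p$ and apply Observation~\ref{obs_diagonal_translation_invariance}'s analogue (the translation hypothesis) to move $p$ along the anti-diagonal. Since translating $p$ by $(t,-t)$ leaves $p_x+p_y$ unchanged, the orders $\prec_p$ and $\prec_{p+(t,-t)}$ are literally the same pair of orders on the same pair of intervals — so the data really only depends on $c := p_x+p_y$, giving orders $\prec^{<c}$ on $(-\infty,c-1]$ and $\prec^{\ge c}$ on $[c,+\infty)$.

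Next I would show consistency as $c$ varies. The key point: if $c < c'$, then both $\prec^{<c}$ (on $(-\infty,c-1]$) and $\prec^{<c'}$ (on $(-\infty,c'-1]$) must agree on their common domain $(-\infty,c-1]$, and similarly $\prec^{\ge c}$ and $\prec^{\ge c'}$ agree on $[c',+\infty)$. To see this, take a point $p$ with $p_x+p_y = c'$ and consider a non-negative-slope segment $S(p,q)$ with $p$ as upper-right endpoint (so $q$ lies down-left of $p$, and the segment interval lies in $(-\infty, c'-1]$); by (S4) and (S3), subsegments and prolongations of these realize the order $\prec^{<c'}$ on that whole interval, and restricting to a subsegment whose own segment interval lies in $(-\infty,c-1]$ recovers exactly the order at a point with coordinate sum $c$ — by the subsegment property (S3) the relevant up/right choices are inherited. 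Thus the orders form a coherent directed system, and their union defines a single total order $\prec$ on $\Z$: given $a \prec b$, pick $c$ with $a,b < c$ (or $\ge c$) and read off the relation from $\prec^{<c}$ (resp. $\prec^{\ge c}$); but actually the two families meet — for any $a<b$ we may also choose $c \le a$, placing both in $[c,\infty)$, and coherence across the "seam" (comparing $\prec^{<c}$ with $c$ large against $\prec^{\ge c'}$ with $c'$ small, both containing $\{a,b\}$) must be checked by picking a segment long enough to see both. That seam check is the crux: I would exhibit a single non-negative-slope segment whose segment interval contains both $a$ and $b$ with the appropriate endpoint configuration and invoke (S3) to compare.

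Once $\prec$ is constructed, I would verify $\cal S = \cal S_\prec$. For any $p,q$ with $p_x \le q_x$, $p_y \le q_y$, the segment $S(p,q)$ in $\cal S$ goes up at exactly those points where the coordinate-sum is $\prec_p$-large among the segment interval; by construction $\prec_p$ restricted to that interval equals $\prec$ restricted to it; and by Theorem~\ref{theorem_order_to_CDS}'s defining recipe this is precisely $S_\prec(p,q)$. The case $p_y > q_y$ follows from the mirror-reflection definition together with the fact that $\cal S$, being a CDS, treats non-positive slopes by the symmetric rule — here I'd need the analogue of Theorem~\ref{prop_order_at_point} for non-positive slopes and check the reflection is consistent, or simply note both systems are CDSes agreeing on all non-negative-slope segments and use (C1) to handle the two slope-signs independently (as remarked at the start of Section~\ref{s:char}). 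Uniqueness is immediate: any $\prec$ with $\cal S = \cal S_\prec$ must, by the length-$2$ segments (segment interval a single pair $\{a,b\}$), satisfy $a \prec b$ iff the corresponding segment goes up after going right, which is determined by $\cal S$.

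**Main obstacle.** The genuinely delicate step is the seam: showing that the "upper" family $\{\prec^{<c}\}_c$ and the "lower" family $\{\prec^{\ge c}\}_c$ are compatible, i.e. that a comparison $a \prec b$ derived from segments where $\{a,b\}$ sits below the coordinate-sum of the base point matches the one derived from segments where $\{a,b\}$ sits above it. This needs a segment long enough that its segment interval straddles both regimes, plus a careful application of (S3)/(C3) to propagate the up/right decision at coordinate-sums $a$ and $b$ coherently; the down-left/up-right endpoint swap in the definition of $S_\prec$ for the two orientations is exactly where one must be careful not to introduce a spurious reversal.
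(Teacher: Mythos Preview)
Your overall strategy is sound and would carry through, but it is more elaborate than necessary, and the paper's argument is considerably more direct. The paper dispenses with the two families $\prec^{<c}$ and $\prec^{\ge c}$ and with the seam entirely: it works only with the orders $\prec_p$ on $[p_x+p_y,\infty)$, and shows directly that $\prec_p$ and $\prec_q$ agree on $[q_x+q_y,\infty)$ (assuming $q_x+q_y \ge p_x+p_y$) in one stroke. Given a segment $S$ starting at $p$ witnessing $B \prec_p A$ and another segment starting at $q$ witnessing $A \prec_q B$, the translation hypothesis is used to slide $S$ diagonally until its translate passes through $q$; the subsegment of that translate from $q$ onwards then witnesses $B \prec_q A$, a contradiction. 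This single move replaces both your reduction to the anti-diagonal coordinate $c$ and your coherence-across-$c$ argument via subsegments, and it makes the seam a non-issue, since as $p_x+p_y$ ranges over all integers the half-lines $[p_x+p_y,\infty)$ already cover all of $\Z$.

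Two smaller remarks. First, even within your framework the seam is not the obstacle you make it out to be: a single segment $S(q,p)$ with $q$ lower-left and $p$ upper-right simultaneously contributes relations to $\prec_q$ on $[q_x+q_y,\infty)$ and to $\prec_p$ on $(-\infty,p_x+p_y-1]$, and these are literally the \emph{same} up/right decisions at the same levels, so the two families agree automatically on every segment interval. Second, your uniqueness argument via ``length-$2$ segments'' is incomplete: such a segment has segment interval $\{c,c+1\}$ for \emph{consecutive} integers only, so you recover merely the comparisons between adjacent integers, which do not determine a total order on $\Z$. Uniqueness follows instead by taking, for arbitrary $a<b$, any segment whose interval contains both $a$ and $b$ and reading off their $\prec$-comparison from whether the segment goes up or right at those levels.
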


\begin{proof}
Let $p,q \in \Z^2$. By Theorem~\ref{prop_order_at_point} the segments starting at $p$
define a unique total order $\prec_p$ on $[p_x+p_y,\infty)$ and similarly the segments
starting at $q$ define a unique total order $\prec_q$ on $[q_x+q_y,\infty)$.
Let $q_x + q_y \geq p_x + p_y$. We want to check whether these two orders agree on
$[q_x+q_y,\infty)$. Assume for contradiction there are integers $q_x+q_y \leq A < B$ such
that there is a segment $S$ starting at $p$ going to some point $r$
implying $B \prec_p A$ and another segment $T$ starting at $q$ implying $A \prec_q B$, see Figure~\ref{fig_unique_order} for an example.
We translate the segment $S$ diagonally by a vector $(t,-t)$ until $q$ lies on the translated segment $S'$
from $p'$ to $r'$. Then the subsegment from $q$ to $r'$ still goes up at level $A$ and to the right at level $B$, implying
$B \prec_q A$, a contradiction.

Therefore we can define a unique total order $\prec$ as follows. To compare two integers $A$ and $B$ take any point $p$
with $p_x + p_y \leq A$ and define $A \prec B$ if and only if $A \prec_p B$. By the argument above, this definition is
independent of the choice of $p$ and the arguments in the proof of Theorem~\ref{prop_order_at_point} carry over to
$\prec$, which shows it is a total order.
\end{proof}

\begin{figure}
\centering
\includegraphics{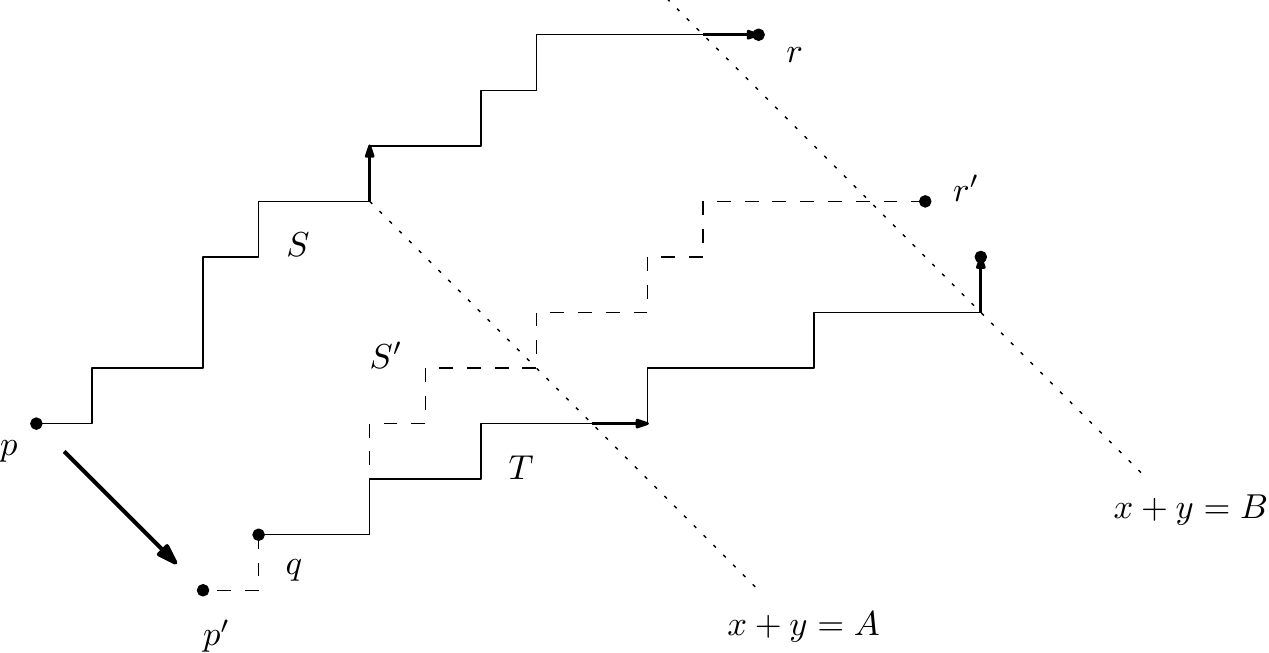}
\caption{Two line segments leading to a inconsistency in the derived orders. After translating one of them diagonally (dashed)
we find a contradiction to the subsegment property.
\label{fig_unique_order}}
\end{figure}


\section{Digital lines}\label{s:lines}

Even though our focus is on the digitalization of line segments, the present setup can be
conveniently extended to a definition of digital lines. We say that a \emph{digital line} is the
vertex set of a path infinite in both directions in the $\Z^2$ base graph, such that the
digital line segment between any two points on the digital line belongs to the digital line.

In this section we restrict our attention to CDSes that are derived from a total order as
described in Section \ref{s:def}. Furthermore, for simplicity, we are only going to consider
lines with non-negative slope, meaning that the Euclidean segment between any two points of
the line has non-negative slope (including zero and infinity).

Consider a digital line $\ell$ derived from ${\cal S}_\prec$. We define $A_{\ell} \subseteq
\Z$ to be the set of numbers $x+y$ for which $\ell$ goes upward at $(x,y)$ and call it the
\emph{slope} of $\ell$. Note that the slope $A_{\ell}$ is an interval in $(\Z,\prec)$ which is
unbounded in the increasing direction, i.e., if $x \in A_{\ell}$, then for any $y \succ x$ we
have $y \in A_{\ell}$. This implies that there is a natural total order on the set of possible
slopes given by inclusion.

The following observation follows directly from the definition of $S_{\prec}$. Starting at a point $p$ we
can construct a line with slope $A$.

\begin{obs}
Every line $\ell$ can be described by its slope $A_{\ell}$ and a point $p$ it contains. Also,
given a point $p$, every $\prec$-interval $A$ which is unbounded in the increasing direction
is a valid slope of a line through $p$, that is, there is a line $\ell$ such that $p \in \ell$ and
$A_{\ell} = A$.
\end{obs}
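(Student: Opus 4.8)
The plan is to prove both halves of the observation by directly unwinding the definition of a digital line and the construction of $S_\prec$. For the first part, I would start with a digital line $\ell$ passing through a point $p$, and show that $\ell$ is completely determined by $A_\ell$ and $p$. Since $\ell$ is the vertex set of a bi-infinite path in the grid graph that is closed under taking digital segments, and since we restrict to non-negative slope, at every vertex $(x,y)$ of $\ell$ the path proceeds either up or to the right. The key point is that whether $\ell$ goes up at $(x,y)$ depends only on the value $x+y$: by definition, $\ell$ goes up at $(x,y)$ precisely when $x+y \in A_\ell$. I would argue that knowing $p \in \ell$ fixes a starting vertex, and then for each integer value $s$ the behavior of $\ell$ on the anti-diagonal $x+y=s$ is dictated by whether $s \in A_\ell$; walking forward and backward from $p$ along these prescribed moves reconstructs $\ell$ uniquely. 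Here I would invoke the diagonal translation invariance of Observation~\ref{obs_diagonal_translation_invariance} to confirm that only the coordinate sum matters.

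For the second part, given a fixed point $p$ and a $\prec$-interval $A$ that is unbounded in the increasing direction, I would explicitly construct a line $\ell$ with $p \in \ell$ and $A_\ell = A$. The construction is the natural one suggested by the definition of $S_\prec$: starting from $p$, at each vertex $(x,y)$ reached, go up if $x+y \in A$ and go right otherwise, extending indefinitely forward, and symmetrically backward (at $(x,y)$, the predecessor is below if $x+y-1 \in A$ and to the left otherwise). This yields a bi-infinite grid path. I would then verify two things: first, that this path actually extends to infinity in both directions with non-negative slope (which follows because $A$ and its complement in $\Z$ are each handled consistently), and second, that the digital segment $S_\prec(u,v)$ between any two points $u,v \in \ell$ lies on $\ell$. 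The latter is the crucial closure property and is where the definition of $S_\prec$ pays off: on the segment interval $[u_x+u_y, v_x+v_y)$, the segment $S_\prec(u,v)$ goes up exactly at those coordinate sums that rank among the top $v_y-u_y$ elements under $\prec$, and since $A$ is a $\prec$-interval unbounded above, these are precisely the elements of $A$ within that range. Hence $S_\prec(u,v)$ makes the same up/right decisions as $\ell$, so it is a subpath of $\ell$.

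The main obstacle I anticipate is making the second half fully rigorous at the level of the closure property, specifically reconciling the ``top $v_y-u_y$ elements of the segment interval'' description of $S_\prec(u,v)$ with membership in the interval $A$. The content to check is that the set $A \cap [u_x+u_y, v_x+v_y)$ consists of exactly the $\prec$-largest elements of that finite segment interval, and that its cardinality equals $v_y - u_y$. The first point is immediate from $A$ being an upward-unbounded $\prec$-interval, but the cardinality matching requires knowing that $u$ and $v$ both lie on $\ell$, so that the number of up-steps $\ell$ takes between them is exactly $v_y - u_y$; this is really a consistency check on the construction rather than a deep fact. Once this is in place, the up/right decisions of $S_\prec(u,v)$ and $\ell$ coincide on the whole segment interval, and closure follows. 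The remaining verifications—that the forward and backward walks are well-defined and yield an honest bi-infinite path—are routine given the definition of $S_\prec$ and the diagonal translation invariance already established, so I would state them briefly and not belabor the details.
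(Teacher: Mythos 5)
Your proposal is correct and takes the same route the paper intends: the paper states this observation without proof, remarking only that it ``follows directly from the definition of $S_{\prec}$'' and that starting at $p$ one can construct a line with slope $A$. Your write-up is exactly that routine verification made explicit --- uniqueness because a monotone bi-infinite path meets each anti-diagonal once so the up/right decision at coordinate sum $s$ is governed by $s \in A_{\ell}$, and existence because an upward-unbounded $\prec$-interval meets each segment interval $[u_x+u_y, v_x+v_y)$ in precisely its $\prec$-greatest elements, of cardinality $v_y-u_y$, so $S_{\prec}(u,v)$ reproduces the path's decisions --- with no gaps.
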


It follows directly from Theorem~\ref{theorem_order_to_CDS} that two lines having a point in
common cannot split and then meet later. Using similar arguments we can see that two lines
cannot ``touch'' without crossing.

\begin{obs} \label{obs_dont_touch}
If two different lines
intersect, then they either cross (having a common segment), or they
have a common half-line.
\end{obs}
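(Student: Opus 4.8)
The plan is to analyze the two ways two distinct digital lines $\ell_1$ and $\ell_2$ can interact, using the machinery already developed. Both lines are determined by a slope (a $\prec$-interval unbounded in the increasing direction) and a point. Since both slopes are increasing-unbounded $\prec$-intervals, they are nested: say $A_{\ell_1} \subseteq A_{\ell_2}$. First I would dispose of the case $A_{\ell_1} = A_{\ell_2}$: if the lines share the same slope and have any point in common, then they coincide, since a line is determined by its slope together with one of its points; if they share the same slope but no common point, they are disjoint and there is nothing to prove. So assume the slopes are different, hence $A_{\ell_1} \subsetneq A_{\ell_2}$, and suppose $\ell_1$ and $\ell_2$ intersect in at least one point.

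Next I would show the intersection is ``upward closed along the lines'': if $\ell_1$ and $\ell_2$ both pass through a point $v=(x,y)$, then they both pass through the next point after $v$ in the upward-right direction. Indeed, at level $x+y$ each line either goes up or goes right, and this decision depends only on whether $x+y \in A_{\ell_i}$. If $x+y \in A_{\ell_1}$, then since $A_{\ell_1} \subseteq A_{\ell_2}$ also $x+y\in A_{\ell_2}$, so both go up to $(x,y+1)$. If $x+y\notin A_{\ell_1}$ there are two subcases, but in each subcase one checks using $r \in S_\prec(\cdot,\cdot) \Rightarrow$ the subsegment/prolongation consequences and the fact that two lines with a common point cannot split and meet again (this is the remark stated just before Observation~\ref{obs_dont_touch}, following from Theorem~\ref{theorem_order_to_CDS}). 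The key point is that the ``cannot split then meet'' property, applied to the semi-infinite portions, forces that once the two lines come together they must agree on an entire common half-line in the increasing direction. Symmetrically, running the argument in the decreasing direction, either the agreement also propagates downward forever — in which case $\ell_1 = \ell_2$, contradicting that they are different — or at some point they separate going downward, and before that separation point they agree, so they share a common half-line unbounded in the decreasing direction, i.e.\ they have a common half-line. The remaining possibility is that the lines cross transversally at an isolated-looking point, but the upward-closure argument shows that whenever they meet they continue together upward, so any intersection already contains a common half-line; if in addition they agree downward forever they are equal, and otherwise they split downward exactly once, which is precisely the ``common segment / common half-line'' dichotomy; combining the up and down analyses yields: either a common doubly-infinite part (impossible for distinct lines unless equal) — so in fact a common segment means a finite common part flanked by splits, but a split can occur at most once on each side by the no-split-then-meet property — hence either a bounded common segment with a split on each side, a common half-line (split on one side only), or the whole line.

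Let me restate the conclusion cleanly, since the two stated alternatives are ``cross (common segment)'' versus ``common half-line.'' Define $I = \{ x+y : (x,y)\in \ell_1 \cap \ell_2\}$; by Observation~\ref{obs_diagonal_translation_invariance}-style reasoning the intersection is an interval of consecutive diagonals (if it contained a gap, the lines would split and meet, contradicting Theorem~\ref{theorem_order_to_CDS}), so $I$ is an interval in the usual order on $\Z$. If $I$ is all of $\Z$, the lines are equal, excluded. If $I$ is a proper half-line $[N,\infty)$ or $(-\infty,N]$, the lines share a common half-line. If $I$ is a bounded interval $[N,N']$, then the lines ``cross'': they run together along the common segment on diagonals $N,\dots,N'$ and are apart on both sides — and one verifies using the slope inclusion $A_{\ell_1}\subsetneq A_{\ell_2}$ that on one flank $\ell_1$ is below $\ell_2$ and on the other flank above, which is the geometric meaning of crossing with a common segment.

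The main obstacle I anticipate is the bookkeeping in the ``upward/downward propagation'' step: one must carefully check, in the subcase where $x+y \notin A_{\ell_1}$ (so $\ell_1$ goes right at $v$) but $x+y \in A_{\ell_2}$ (so $\ell_2$ goes up at $v$), that this already contradicts the no-split-then-meet principle applied to the appropriate pair of finite subsegments on $\ell_1$ and $\ell_2$ — i.e.\ that the two lines, having just separated at $v$ with opposite vertical/horizontal choices, cannot be made to meet again by the prolongation axiom, which is essentially Case~1 of the proof of (S3) in Theorem~\ref{theorem_order_to_CDS} transported to lines. Once that local lemma is in hand, the global structure (interval of diagonals, at most one split per side, the three resulting cases) follows by a short compactness/monotonicity argument using that each $A_{\ell_i}$ is a $\prec$-interval unbounded upward.
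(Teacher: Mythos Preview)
Your fourth paragraph (``Let me restate the conclusion cleanly'') is a correct argument and is in the spirit of the paper's one-line hint (``using similar arguments'' to the (S3) proof in Theorem~\ref{theorem_order_to_CDS}; the paper gives no further detail). But the two paragraphs preceding it are not merely redundant---they contain a false claim, and the ``obstacle'' you flag at the end is a symptom of the same error.

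The false step is the ``upward closure''. It is \emph{not} true that if both lines pass through $v=(x,y)$ then they agree at the next point. If $x+y\in A_{\ell_2}\setminus A_{\ell_1}$ then $\ell_1$ goes right and $\ell_2$ goes up at $v$; they separate there and may never meet again. This does not contradict the no-split-then-meet property: that property forbids splitting \emph{and then meeting}, not splitting per se. Hence your inference that ``once the two lines come together they must agree on an entire common half-line in the increasing direction'' is false---two crossing lines share only a finite segment. Likewise, the ``obstacle'' in your last paragraph is not a contradiction to be derived; that subcase is simply the point at which the lines part, and nothing more can or should be squeezed out of it.

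What actually does the work (and what you gesture at correctly in paragraph four) is monotonicity coming from the slope inclusion. Since $A_{\ell_1}\subseteq A_{\ell_2}$, whenever $\ell_1$ goes up at a level so does $\ell_2$; hence the integer-valued function
\[
n\ \longmapsto\ (\text{$x$-coordinate of }\ell_1\text{ on the anti-diagonal }x+y=n)\ -\ (\text{$x$-coordinate of }\ell_2\text{ there})
\]
is non-decreasing. Its zero set is therefore an interval of $\Z$: if it is all of $\Z$ the lines coincide; if it is a half-line they share a half-line; if it is bounded the function changes sign across it, which is precisely ``crossing''. This monotonicity is what the paper means by ``cannot touch without crossing'', and it already implies no-split-then-meet for lines (a monotone integer function cannot have a gap in its zero set). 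So drop paragraphs two, three and five, and keep a tightened version of paragraph four.
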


It is straightforward to verify that the following lemma holds.

\begin{lemma} \label{lemma_slope_difference}
Consider two different slopes $A$ and $B$ with $A \subset B$. Let $I = B \setminus A$ be the
difference of the slopes. We distinguish three cases.
\begin{enumerate}
\item[1)] If $I$ is finite, then
there are lines $l$ and $s$ such that $A_l=A$, $A_s=B$ and $l$ and $s$
intersect in a lower-left halfline and there are lines $l'$, $s'$ with slopes
$A_{l'}=A$, $A_{s'}=B$ intersecting in an upper-right halfline.

\item[2)] If $I$ is infinite and bounded in one direction in the natural order on $\Z$, then we can
find lines $l$ and $s$ with slope $A$ and $B$ intersecting in a lower-left or an upper-right
halfline depending on whether $I$ is lower- or upper-bounded with respect to the natural order.
\item[3)] If $I$ is unbounded in both directions, then all lines $l$ and $s$ with slope $A$ and $B$ do intersect
in a finite segment.

\end{enumerate}
\end{lemma}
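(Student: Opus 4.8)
The plan is to describe each line $\ell$ with slope $A_\ell$ by the function $x_\ell$, where $x_\ell(n)$ denotes the $x$-coordinate of the unique point of $\ell$ lying on the diagonal $\{x+y=n\}$. Since $\ell$ steps up at level $n$ exactly when $n\in A_\ell$ and steps right otherwise, $x_\ell(n+1)-x_\ell(n)=0$ if $n\in A_\ell$ and $=1$ if $n\notin A_\ell$. Given a line $l$ of slope $A$ and a line $s$ of slope $B$, put $f(n):=x_s(n)-x_l(n)$. By the previous identity $f(n+1)-f(n)$ equals $-1$ when $n\in I:=B\setminus A$ and $0$ otherwise; hence $f$ is non-increasing, strictly decreasing exactly at the levels of $I$, with total decrease $|I|$ over all of $\Z$. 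Crucially, $l$ and $s$ share the grid point on the diagonal $\{x+y=n\}$ if and only if $f(n)=0$, so the intersection $l\cap s$ is carried by $Z:=\{n\in\Z:f(n)=0\}$, which is an interval of integers because $f$ is non-increasing; accordingly $l\cap s$ is empty, a lower-left half-line, an upper-right half-line, or a finite nonempty segment exactly when $Z$ is empty, of the form $(-\infty,m]$, of the form $[m,\infty)$, or a bounded nonempty interval.

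With this in hand the three cases reduce to placing $Z$ correctly. For Case~1 set $c:=\min I$, $d:=\max I$ in the natural order on $\Z$. Choosing a point $p$ on the diagonal $\{x+y=c\}$ and letting $l,s$ be the lines through $p$ of slopes $A,B$, no level strictly below $c$ lies in $I$, so $f$ is constant on $(-\infty,c]$ and $f(c)=0$; thus $f\equiv0$ there and $Z\supseteq(-\infty,c]$, while $f(d+1)=-|I|<0$ forces $Z$ to have the form $(-\infty,m]$, i.e.\ $l\cap s$ is a lower-left half-line. Symmetrically, lines $l',s'$ of slopes $A,B$ through a common point on $\{x+y=d+1\}$ have offset $0$ on $[d+1,\infty)$, yielding a common upper-right half-line. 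For Case~2, assume $I$ is lower-bounded in the natural order with $c:=\min I$ (the upper-bounded case is the mirror image). The first construction above produces $l,s$ whose offset is $0$ on $(-\infty,c]$, so they share a lower-left half-line; since $I$ is infinite, $f(n)\to-\infty$ as $n\to+\infty$, so $Z$ is bounded above and $l\cap s$ is precisely that half-line.

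For Case~3, take \emph{any} line $l$ of slope $A$ and \emph{any} line $s$ of slope $B$. Since $I$ is unbounded below in the natural order, $f(n)\to+\infty$ as $n\to-\infty$, and since $I$ is unbounded above, $f(n)\to-\infty$ as $n\to+\infty$. As $f$ drops by at most $1$ per step it must attain the value $0$, and $Z$ lies between the last level with $f\geq1$ and the first with $f\leq-1$, so $Z$ is a nonempty bounded interval and $l\cap s$ is a finite segment.

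I expect no serious obstacle here — this is the ``straightforward to verify'' lemma promised in the text — but the step that needs care is the bookkeeping linking the combinatorial position of $I$ \emph{in the natural order} (which is independent of the $\prec$-convexity that makes $A$ and $B$ up-sets) to the geometric shape of $l\cap s$; the monotone offset function $f$ is exactly the device that makes this transparent and, as a by-product, re-proves Observation~\ref{obs_dont_touch}. The only external input used is the existence and uniqueness of a line through a prescribed point with a prescribed slope, recorded in the observations of this section.
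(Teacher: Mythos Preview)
Your argument is correct. The paper does not actually give a proof of this lemma --- it is introduced with ``It is straightforward to verify that the following lemma holds'' and left at that --- so there is no ``paper proof'' to compare against. Your offset function $f(n)=x_s(n)-x_l(n)$ is exactly the right device: the computation $f(n+1)-f(n)=-\mathbf{1}_{\{n\in I\}}$ is correct (using $A\subset B$), monotonicity of $f$ immediately makes $Z=\{n:f(n)=0\}$ an interval in the natural order, and the three cases then become the three possible shapes of $Z$. The case analysis is clean and the choice of base points (level $c=\min I$ for a lower-left half-line, level $d+1$ for an upper-right one) is the natural one; your remark that Case~3 is a universal statement about \emph{all} pairs $(l,s)$, handled via $f(n)\to\pm\infty$, is also on target.
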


We define two lines to be \emph{parallel} if they do not cross, that is, according
to Observation~\ref{obs_dont_touch}, two lines are parallel if they are disjoint or if they agree on a halfline.
We distinguish two possible cases how the slope $A_{\ell}$ of a line $\ell$ may look like. If there is a $c \in \Z$
such that $A_{\ell} = [c,\infty)_{\prec} = \{a\in\Z | a \succ c \} \cup \{c\}$ or $A_{\ell} =
(c,\infty)_{\prec} = \{a\in\Z | a \succ c\}$, that is, if its boundary can be described by the smallest
element, either including or excluding this element, then we call the slope $A_{\ell}$
\emph{rational}. In the special cases $A_{\ell} = \Z$ and $A_{\ell} = \emptyset$ we define
$A_{\ell}$ as rational, too. If no such $c$ exists, or, in other words, if $A_{\ell}$ does not
have a smallest element and its complement $\Z \setminus A_{\ell}$ does not have a greatest
element (and they are not empty), then we call $A_{\ell}$ \emph{irrational}.

We proceed by analyzing the digital lines derived from the special total order that we defined in Section~\ref{s:H}.
With the help of the previous lemma, we can characterize which lines do have unique parallels and which do not.



\begin{theorem}
Let $\prec$ be the total order on $\Z$ defined in Section~\ref{s:H} using the powers of $2$,
$\cal S_{\prec}$ the CDS obtained from it, and $\ell$ a line with respect to $\cal S_{\prec}$. Let $p
\in \Z^2$ such that $p \notin \ell$.
\begin{enumerate}
\item[1)] If the slope $A_{\ell}$ is irrational, then there is
a unique line $\ell'$ through $p$ that is parallel to $\ell$. Furthermore,
$A_{\ell'} = A_{\ell}$.
\item[2)]
If $A_{\ell}$ is rational, then $\ell$ has exactly two parallels
$\ell'$ and $\ell''$ through $p$, one of them with the same slope as $\ell$, the other
with a slope that differs by one element. Consequently $\ell'$ and $\ell''$ either have
a common halfline to the left and split at one point to run parallel at distance one thereafter, or
the other way round.
\end{enumerate}
\end{theorem}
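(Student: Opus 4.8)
The plan is to prove the two cases using Lemma~\ref{lemma_slope_difference}, together with the uniqueness statement of Theorem~\ref{prop_order_at_point} and the structure of the special order $\prec$ from Section~\ref{s:H}.

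\textbf{Overview.} First I would recall that, by the observation preceding Observation~\ref{obs_dont_touch}, a line $\ell'$ through $p$ is determined by its slope $A_{\ell'}$, and that by Observation~\ref{obs_dont_touch} two lines are parallel exactly when they are disjoint or share a halfline, never when they properly cross on a finite segment. So the task reduces to determining, among all $\prec$-intervals $A'$ unbounded in the increasing direction, for which ones the line through $p$ with slope $A'$ is parallel to $\ell$. If $A' = A_\ell$, the two lines have the same slope and hence cannot cross on a finite segment (by Lemma~\ref{lemma_slope_difference}, case 3, a finite crossing forces the difference of slopes to be unbounded in both directions, which is impossible when they are equal); so same-slope lines through distinct points are always parallel, and this already gives one parallel in both cases. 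The real content is to understand when a \emph{different} slope $A' \neq A_\ell$ gives a parallel, and this is governed entirely by whether $I = A' \triangle A_\ell$ (the symmetric difference, which is an interval on one side of the boundary) is unbounded in both directions in the \emph{natural} order on $\Z$: by Lemma~\ref{lemma_slope_difference} it is a finite crossing (not parallel) iff $I$ is unbounded in both natural directions, and a halfline intersection (parallel) otherwise.

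\textbf{Case 1: $A_\ell$ irrational.} Here $A_\ell$ has no $\prec$-smallest element and its complement has no $\prec$-greatest element. I claim that for the special order $\prec$, any $\prec$-interval $A'$ unbounded above with $A' \neq A_\ell$ differs from $A_\ell$ by a set $I$ that is unbounded in both natural directions, hence gives a crossing, hence is \emph{not} parallel; combined with the same-slope line this yields exactly one parallel. To see this, note that moving the boundary of $A_\ell$ in the $\prec$-order to obtain $A'$ means adding or removing a $\prec$-interval $I$ at the boundary. The key number-theoretic fact about $\prec$ (the same phenomenon underlying Lemma~\ref{lemma_sym} and the Van der Corput connection) is that every $\prec$-interval that is ``close to the boundary'' of an irrational cut — i.e.\ a $\prec$-interval consisting of elements all $\prec$-below some fixed element but with no $\prec$-least element issues — contains integers of arbitrarily large absolute value of both signs, because the integers divisible by high powers of $2$ are spread out over all of $\Z$ and the tie-breaking rule distributes residues symmetrically. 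More precisely, I would argue: if $I$ were bounded in, say, the positive natural direction, then $I \subseteq (-\infty, N]$ for some $N$, but then the boundary behavior of $A_\ell$ near $+\infty$ (natural) would be eventually determined, contradicting irrationality (one shows the set $\{x : x \geq N\}$ lies entirely on one side of the $\prec$-cut, forcing a natural-order description of that side). Thus $I$ is unbounded in both natural directions, case 3 of Lemma~\ref{lemma_slope_difference} applies, the lines cross, and are not parallel. Finally, uniqueness of the parallel with slope $A_\ell$: two parallels through $p$ with the same slope would coincide by the observation that slope plus a point determines the line. Hence $\ell'$ is unique and $A_{\ell'} = A_\ell$.

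\textbf{Case 2: $A_\ell$ rational.} Now $A_\ell = [c,\infty)_\prec$ or $(c,\infty)_\prec$ (or the trivial cases $\Z, \emptyset$, which I would handle separately and easily). Consider the two slopes $A_\ell$ itself and $A' := A_\ell \triangle \{c\}$, i.e.\ the slope obtained by adding or deleting exactly the $\prec$-boundary element $c$. For $A'$, the difference $I = \{c\}$ is a single integer, hence bounded in both natural directions, so by case 1 (or case 2) of Lemma~\ref{lemma_slope_difference} the line through $p$ with slope $A'$ meets $\ell$ in a halfline and is therefore parallel. For any \emph{other} slope $A''$ unbounded above, the symmetric difference $I = A'' \triangle A_\ell$ is a $\prec$-interval containing at least two elements all lying (except possibly the single boundary element) on the non-least side; I would show this $I$ is unbounded in both natural directions by the same powers-of-$2$ argument as in Case~1 (a $\prec$-interval of size $\geq 2$ bordering a rational cut, once you go past the single rational boundary value, again has no natural-order bound on either side), so those lines cross $\ell$ on a finite segment and are not parallel. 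Hence there are exactly two parallels, $\ell'$ with $A_{\ell'} = A_\ell$ and $\ell''$ with $A_{\ell''} = A'$ differing by the single element $c$. The final sentence about the geometry — common halfline on one side, splitting at one point on the other — follows directly from case 1 of Lemma~\ref{lemma_slope_difference} applied to the pair $(A_\ell, A')$ with finite (singleton) difference: one gets an intersection in a lower-left halfline and, from the ``other direction'' part of that case, the upper-right behavior is the split, or vice versa depending on whether $c$ was being added or removed and on the position of $c$ in the natural order.

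\textbf{Main obstacle.} The crux is the number-theoretic claim, used in both cases, that a $\prec$-interval which sits at the boundary of an irrational cut (or which has size at least two at a rational cut) is necessarily unbounded in both directions of the \emph{natural} order on $\Z$. This is exactly where the specific structure of the powers-of-$2$ order enters — it is false for a general total order — and it is essentially the same combinatorial heart as Lemma~\ref{lemma_sym}. I expect the cleanest route is: for any integer $N$, exhibit integers $x < -N$ and $y > N$ that are $\prec$-consecutive-ish near the cut, using that among any $2^{m}$ consecutive integers there is exactly one divisible by $2^{m}$, and the tie-break pushes large-$|k|_2$ elements to be $\prec$-large regardless of sign; this lets me place elements of both signs arbitrarily high in $\prec$, which is what forces $I$ to straddle both natural-order tails. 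Everything else is bookkeeping with Lemma~\ref{lemma_slope_difference} and the slope-plus-point description of lines.
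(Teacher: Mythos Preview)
Your overall strategy---reduce to Lemma~\ref{lemma_slope_difference} and analyze the symmetric difference $I = A_{\ell} \triangle A'$---is exactly the paper's approach. The difference is in how the casework on $I$ is organized, and your version has a genuine muddle in Case~1.

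The paper's argument is shorter because it first isolates two properties of this particular $\prec$: (a) $\prec$ is \emph{dense} on $\Z$, so every $\prec$-interval with at least two elements is infinite; and (b) every infinite $\prec$-interval is unbounded in both directions of the natural order. From (a) one gets immediately $|I| \in \{0,1,\infty\}$. From (b) together with case~3 of Lemma~\ref{lemma_slope_difference}, $|I|=\infty$ forces a finite crossing, so any parallel has $|I|\le 1$. Only now does irrationality enter, and only for the trivial observation that if $A_\ell$ has no $\prec$-least element and $\Z\setminus A_\ell$ no $\prec$-greatest element, then adding or removing a single integer cannot again produce a $\prec$-interval; hence $|I|=1$ is impossible and the parallel is unique. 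In the rational case the same observation shows $|I|=1$ is possible in exactly one way (add or remove the boundary element $c$), giving exactly two parallels.

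Your Case~1 tries to prove directly that for irrational $A_\ell$ every $A'\neq A_\ell$ yields $I$ unbounded in both natural directions, and the justification offered---``$I\subseteq(-\infty,N]$ would force $\{x:x\ge N\}$ to lie entirely on one side of the $\prec$-cut, contradicting irrationality''---does not work. From $I\subseteq(-\infty,N]$ you only get that $A_\ell$ and $A'$ agree on $(N,\infty)$; it does not follow that $(N,\infty)$ sits on one side of the cut, and even if it did, that is a statement about the natural order and says nothing about whether $A_\ell$ has a $\prec$-least element. Irrationality is a $\prec$-notion and cannot be contradicted by natural-order information in this way. The correct role of irrationality is purely the interval argument above (ruling out $|I|=1$); the natural-order unboundedness of $I$ when $|I|\ge 2$ comes from density plus property~(b), and has nothing to do with whether $A_\ell$ is rational or irrational. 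Your ``Main obstacle'' paragraph is closer to the mark---it is essentially aiming at property~(b)---but you should invoke it only after density has reduced you to the infinite case, and you should use irrationality separately and only for the one-element case.
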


\begin{proof}
Note that $\prec$ is a dense order on $\Z$, that is, for any integers $a \prec b$, we find a
$c \in \Z$ such that $a \prec c \prec b$. Equivalently, this means that every $\prec$-interval
$[a,b]_{\prec}$ with $a \neq b $ is infinite. Let $s$ be an arbitrary line through $p$
parallel to $\ell$. Consider the symmetric difference of the slopes $I = A_{\ell} \triangle
A_{s} $. If $I$ is empty, the slopes are equal and $s$ is the diagonal translate of $\ell$
going through $p$. If $I \neq \emptyset$, by density of the order, $I$ is either infinite or
consists only of one element $a$. If $I$ is infinite, then it is unbounded according to the
natural order in both directions. (Beside density, this is the only property of $\prec$ that
we use.) According to Lemma~\ref{lemma_slope_difference}, $s$ and $\ell$ do intersect in a
finite segment in this case, which is a contradiction because $s$ is parallel to $\ell$.
Therefore, $I$ consists of one element. If $A_{\ell}$ is irrational, by adding one element to
$A_{\ell}$ or removing one element from $A_{\ell}$ we do not get an interval. Hence, in this
case, there is only one possible slope for any line parallel to $\ell$, namely, $A_{\ell}$. If
$A_{\ell}$ is rational, then either $A_{\ell} = \{a\in\Z | a \succ c \} \cup \{c\}$ or
$A_{\ell} = \{a\in\Z | a \succ c\}$, for some $c\in \Z$, so removing $c$ or adding $c$,
respectively, yields the only different possible slope for a parallel through $p$. As the
slopes of these two parallels only differ by one element, they either have a common halfline
to the left or a common halfline to the right.
\end{proof}

\section{Higher dimensions} \label{s:high-d}

It is natural to ask whether there is a CDS in more than two dimensions. The definition of a
CDS directly carries over to the higher dimensional spaces. Instead of $\Z^2$ we now consider
$\Z^d$, for a fixed integer $d \geq 3$, with the usual graph structure, that is, two points $p$
and $q$ are adjacent if they differ in exactly one coordinate by exactly one. The axioms
(S1)-(S4) stay the same, verbatim, and the monotonicity axiom (S5) now reads as follows: If
for $p = (p_1,\ldots,p_d), q=(q_1,\ldots,q_d) \in \Z^d$ there is an $i$ such that $p_i = q_i$,
then for all $r=(r_1,\ldots,r_d) \in S(p,q)$ we have $r_i=p_i=q_i$. We define the \emph{slope
type} of a pair of points $(p,q)$ as the sign vector $(\sigma_1,\ldots,\sigma_d) \in
\{+1,-1\}^d$, $\sigma_i = +1$ if $p_i \leq q_i$, and $\sigma_i = -1$ if $p_i > q_i$. The slope
type of a digital segment $S(p,q)$ is defined as the slope type of $(p,q)$. We say a segment
has \emph{strictly positive slope}, if its slope type is $(+1,\ldots,+1)$, that is, the
coordinates are monotone increasing in each coordinate.

It is not hard to see that we can again derive a consistent system from an arbitrary total
order $\prec$ on $\Z$ if we only consider segments that have strictly positive slope, that is,
slope type $(+1,\ldots,+1)$; the only difference is that now we have to cut the segment
interval $[p_1+\ldots+p_d,q_1+\ldots+q_d-1]$ into $d$ parts, according to $\prec$. Let $p,q
\in \Z^d$ be two points, such that $p_i \leq q_i$ for all $1\leq i \leq d$. We can define the
segment $S_{\prec}(p,q)$ in a similar way as in the two dimensional case -- starting at $p$
and repeatedly going in one of the $d$ possible directions, collecting points from $\Z^d$,
until reaching $q$. If we are at a point $(r_1,\ldots,r_d)$ for which $r_1+r_2+\ldots+r_d$ is
among the $q_d-p_d$ greatest elements of the segment interval
$[p_1+\ldots+p_d,q_1+\ldots+q_d-1]$ according to $\prec$, we proceed in direction $d$; if it
is among the $q_{d-1} - p_{d-1}$ greatest elements of the segment interval that remain after
removing the $q_d-p_d$ elements that were the greatest, we proceed in direction $d-1$, and so
on. Finally, if $r_1+r_2+\ldots+r_d$ it is among the $q_1 - p_1$ smallest elements of the
segment interval, we proceed in the first direction.

\begin{theorem} \label{theorem_order_to_CDS_3d}
The definition above yields a CDS of segments with strictly positive slope.
\end{theorem}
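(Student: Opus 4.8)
The plan is to mirror the structure of the proof of Theorem~\ref{theorem_order_to_CDS}, verifying (S1)--(S5) for segments of slope type $(+1,\ldots,+1)$. Axioms (S1), (S2), and (S5) are immediate from the construction: the walk defined coordinate-by-coordinate is a grid path by fiat, its definition depends only on the ordered partition of the segment interval and hence is symmetric in $p,q$, and if $p_i = q_i$ for some $i$ then zero steps are taken in direction $i$, so the whole segment lies in the hyperplane $x_i = p_i$. For (S4) the argument is again a case distinction: at the far endpoint $q$, either $q_1+\ldots+q_d$ lies among the appropriate top block of the enlarged segment interval $[p_1+\ldots+p_d,\,q_1+\ldots+q_d]$ for direction $d$, in which case prolonging by one step in direction $d$ works; if it lies in the next block, prolong in direction $d-1$; and so on, so some prolongation always exists (with the boundary cases giving a genuine split, exactly as in dimension two).

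The real content, as in the planar case, is the subsegment property (S3), and I expect this to be the main obstacle. Suppose $r \in S_\prec(p,q)$ but $S_\prec(p,r) \not\subseteq S_\prec(p,q)$; since both walks start at $p$, go monotonically, and (by Observation~\ref{obs_diagonal_translation_invariance}, which carries over verbatim) depend only on coordinate sums, they first split at some point $(a_1,\ldots,a_d)$ on a level hyperplane $\sum x_i = C$ and first rejoin at some point $(c_1,\ldots,c_d)$ on a level $\sum x_i = C'$. In dimension two, "splitting at level $C$" forces a comparison $\text{(something)} \prec C$ or $C \prec \text{(something)}$ because there are only two directions, so one walk going "up" and the other "right" pins down which block $C$ sits in. The difficulty in higher dimensions is that at the splitting point the two walks may proceed in two \emph{different} among the $d$ directions, neither of which need be the "last" direction, so a single level by itself no longer determines a clean inequality. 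The key step will be to observe that the blocks of the segment interval are \emph{nested the same way for both walks}: because the block boundaries are determined purely by the multiset $\{\sum x_i\}$ ranked under $\prec$, together with the fixed quotas $q_i - p_i$ (for $S_\prec(p,q)$) versus $r_i - p_i$ (for $S_\prec(p,r)$, which are \emph{smaller} quotas using the \emph{same} ranking on a sub-interval), the set of levels assigned to direction $d$, then the next block assigned to direction $d-1$, etc., for the shorter walk is the restriction of the corresponding structure for the longer walk to its own segment interval. Concretely: in $S_\prec(p,r)$, a level $C$ in $[p_1+\ldots+p_d, r_1+\ldots+r_d)$ is assigned to direction $j$ iff, among the levels of that sub-interval, $C$'s $\prec$-rank places it in the $j$th quota-block; and since $r \in S_\prec(p,q)$, each quota $r_i - p_i$ in the shorter walk is at most the quota $q_i - p_i$ in the longer walk restricted appropriately. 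From this one derives: if the two walks disagree at level $C$ (say $S_\prec(p,q)$ goes in direction $j$ while $S_\prec(p,r)$ goes in direction $j' \ne j$ at the corresponding point), then the set of directions is still linearly ordered by the $\prec$-ranking, and at the rejoining level $C'$ the disagreement must be reversed in the sense of this linear order, yielding $C' \preceq C$ from one walk and $C \preceq C'$ from the other (with strictness somewhere), a contradiction — exactly the two-direction argument, but with "$d$ versus $d-1$ versus $\ldots$" in place of "up versus right."

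To make the last paragraph rigorous I would introduce, for a fixed segment interval and fixed quotas $(n_1,\ldots,n_d)$ with $\sum n_i$ equal to the interval size, the function $\delta$ that sends each level to its assigned direction, and prove the monotonicity lemma: if $C$ is assigned to direction $j$ and $C'$ to direction $j'$ with $j < j'$, then $C' \prec C$ — this is just unfolding the "greatest / next greatest / $\ldots$" definition of the blocks, and it is the exact analogue of the statement "$c+d-1 \prec a+b$" used in Case~1 of the proof of Theorem~\ref{theorem_order_to_CDS}. Then the split/rejoin argument for (S3) reads: at the split level $C$, letting $j = \delta_{\text{long}}(\cdot)$ and $j' = \delta_{\text{short}}(\cdot)$ be the two directions taken, the short walk is "ahead" in the coordinate $\min(j,j')$ after this step; tracking which walk enters the rejoining point from which direction forces $\delta(\cdot)$ at level $C'$ to swap the roles of $j$ and $j'$, and applying the monotonicity lemma to both levels on both walks gives $C \prec C'$ and $C' \prec C$. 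The only subtlety beyond dimension two is bookkeeping the at-most-three (in dimension two) versus up-to-$d$ possible direction pairs, and checking that the quota inequalities $r_i - p_i \le q_i - p_i$ together with $r \in S_\prec(p,q)$ indeed make $\delta_{\text{short}}$ a ``refinement'' of $\delta_{\text{long}}$ on the shorter interval; granting that, the contradiction is obtained exactly as before, so (S3) holds and $\mathcal{S}_\prec$ restricted to strictly positive slope is a CDS.
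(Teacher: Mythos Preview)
Your overall structure mirrors the paper's: (S1), (S2), (S5) are immediate, (S4) is a case distinction on which block $q_1+\cdots+q_d$ falls into, and the real content is (S3). Your monotonicity lemma (with the sign corrected: $j<j'$ forces $C\prec C'$, since higher-indexed directions receive the $\prec$-greater levels) is exactly the principle the paper uses implicitly. However, your split/rejoin argument has a genuine gap in dimension $d\ge 3$. You claim that comparing the directions taken at the split level $C$ and at the last level before the rejoin already yields opposite $\prec$-comparisons along the two walks. This works in two dimensions because there are only two directions and they must swap, but it fails for $d\ge 3$. Concretely, take $d=3$ and the two direction words $\alpha=(3,1,2)$ and $\beta=(2,3,1)$: the corresponding grid paths share only their endpoints, yet at the split level $\alpha$ uses direction $3>2$ and at the last level $\alpha$ uses direction $2>1$, so \emph{both} walks give ``split level $\succ$ last level'' from your monotonicity lemma --- no contradiction from those two levels alone.

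The paper closes this gap with a short word-combinatorics step rather than by looking at the rejoin. Writing the two subpaths between split and rejoin as words $\alpha=a_1\cdots a_k$ and $\beta=b_1\cdots b_k$ over $[d]$ with the same letter multiset and, say, $a_1>b_1$, a counting argument shows there must exist an index $1<i\le k$ with $a_i<a_1$ and $b_i>b_1$; applying the monotonicity lemma to levels $1$ and $i$ on each walk then gives the contradictory pair of $\prec$-comparisons. (In the example above, $i=2$ works.) Incidentally, your ``refinement'' observation is stronger than you use it: if you prove carefully that the $\prec$-ordered partition of the subinterval $[p_1+\cdots+p_d,\,r_1+\cdots+r_d-1]$ with the induced quotas \emph{equals} the restriction of the big partition --- not merely that the quotas are ``at most'' --- then $\delta_{\text{short}}=\delta_{\text{long}}$ on that subinterval exactly, the two walks never disagree, and (S3) follows directly with no split/rejoin argument at all.
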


%

\begin{proof}
The crucial axiom to verify is the subsegment property (S3). Assume there are two segments
with strictly positive slope, splitting at some point $p$ and meeting again for the first time
at $q$. The two subsegments from $p$ to $q$ can be seen as words over the alphabet
$[d]=\{1,2,\ldots,d\}$, where $1$ stands for going in the first direction, $2$ for going in
the second direction, and so on. So let $\alpha=a_1 a_2 \ldots a_k, \beta = b_1 b_2 \ldots b_k
\in [d]^k$ be these two words. By assumption they differ at the beginning and at the end and
they contain the same number of each of the letters, that is, for any $l \in [d]$, $|\{i:~a_i
= l\}| = |\{i:~b_i = l\}|$. Without loss of generality we may assume that $a_1 > b_1$. Then
there must be an $1 < i \leq k$ such that $a_i < a_1$ and $b_i > b_1$. (If there were no such
$i$, then for any $1 < i \leq k$ with $a_i < a_1$ we have $b_i \leq b_1 < a_1$. Now looking at
all letters that are strictly smaller than $a_1$ in both words, we see that there is at least
one such letter more in $b$, namely at the first position, a contradiction.) This leads to a
contradiction, as translated back to the original setting it implies $p_1 + \ldots + p_d \succ
p_1 + \ldots + p_d + i -1$, if we look at the interval of the first segment, and at the same
time $p_1 + \ldots + p_d \prec p_1 + \ldots + p_d + i -1$, if we look at the interval of the
second segment. This proves (S3).

The rest of the axioms can be verified similarly as in the 2-dimensional case.
\end{proof}

Of course, we can use the same construction to define segments for all the remaining slope
types. However, unlike in the $2$-dimensional case, putting them all together in an attempt to
construct a complete CDS fails, as the axiom (S3) is violated. We are curious if this approach
to construction can be modified to yield a CDS.

\section*{Acknowledgments}
We are indebted to Ji\v{r}\'{\i} Matou\v{s}ek, who posed this problem at the 7th Gremo Workshop on Open Problems - GWOP 2009. We would like to
thank J\' ozsef Solymosi for participating in the fruitful discussions that led to the main result of this paper. Finally, we thank the
organizers of GWOP 2009 for inviting us to the workshop and providing us with a gratifying working environment.

\bibliographystyle{abbrv}

\end{document}